\newtheorem{lemma}{Lemma}
\newtheorem{theorem}{Theorem}
\newtheorem{property}{Property}
\newcommand{\col}{\mathcal{K}}
\newcommand{\C}{\mathcal{C}}
\newcommand{\0}[1]{}   
\newtheorem{cor}{Corollary}
\newtheorem{observation}{Observation}
\newcommand{\figref}[1]{\autoref{fig:#1}}
\begin{document}
\date{}

\title{Minimum Forcing Sets for Miura Folding Patterns}
\author{Brad Ballinger\thanks{Humboldt State University, Arcata, USA, \texttt{bradley.ballinger@humboldt.edu}} \and Mirela Damian\thanks{Computer Science Dept., Villanova University, Villanova, USA, \texttt{mirela.damian@villanova.edu}} \and David Eppstein\thanks{Computer Science Dept., Univ. of California, Irvine, USA, \texttt{david.eppstein@gmail.com}; work supported by the National Science Foundation under Grant CCF-1228639 and by the Office of Naval Research under Grant No. N00014-08-1-1015.} \and Robin Flatland\thanks{Computer Science Dept., Siena College, Loudonville, NY, USA, \texttt{flatland@siena.edu}} \and Jessica Ginepro\thanks{Department of Mathematics, University of Connecticut, \texttt{jessginepro@yahoo.com}} \and Thomas Hull\thanks{Department of Mathematics, Western New England University, \texttt{thull@wne.edu}; work supported by the National Science Foundation under grant EFRI-1240441.}}
    
\maketitle

\begin{abstract}
We introduce the study of \emph{forcing sets} in mathematical origami. The origami material folds flat along straight line segments called \emph{creases}, each of which is assigned a folding direction of mountain or valley. A subset $F$ of creases is \emph{forcing} if the global folding mountain/valley assignment can be deduced from its restriction to $F$. 
In this paper we focus on one particular class of foldable patterns called Miura-ori, which divide the plane into congruent parallelograms using horizontal lines and zig-zag vertical lines.  
We develop efficient algorithms for constructing a minimum forcing set of a Miura-ori map,
and for deciding whether a given set of creases is forcing or not. We also provide tight bounds on the size of a forcing set, establishing that the standard mountain-valley assignment for the Miura-ori is the one that requires the most creases in its forcing sets. Additionally, given a partial mountain/valley assignment to a subset of creases of a Miura-ori map, we determine  whether the assignment domain can be extended to a locally flat-foldable pattern on all the creases. At the heart of our results is a novel correspondence between flat-foldable Miura-ori maps and $3$-colorings of grid graphs. 
\end{abstract}


\section{Introduction}

In the mathematical modeling of programmable matter, the topic of \emph{self-folding origami}---where a pre-programmed material folds itself from an initially flat state in response to some stimulus or mechanism---has been gaining in popularity.  The origami material can be pre-programmed to fold along straight lines, called \emph{creases}, by means of rotation to achieve a mountain or valley fold (for examples, see~\cite{Hawkes+10, Ionov13, Leong+07,MR+05}). In such cases, the self-folding process can be economized by programming only a subset of the creases to self-fold, which would then \emph{force} the other, passive creases to fold as originally intended.  We call such a subset of the creases a {\em forcing set}.\0{I think there's a difference between choosing a subset $F$ that is logically compatible with only one global folding pattern (which is what we've done), and choosing a subset $F'$ that can actually encourage that global pattern (which is what we talk about here); in practice, our forcing set might nudge a crease pattern to nondeterministically choose some local folding that doesn't end well.}

Because of the cost of programming creases, finding a forcing set of smallest size in a given crease pattern would be a useful tool for self-folding. To our knowledge, this is the first paper to address this problem in two dimensions.\footnote{For one dimensional crease patterns, there are some unpublished forcing set results by a subset of the authors here.}
Due both to the high computational complexity of dealing with unrestricted crease patterns~\cite{BerHay-SODA-96}, and to the versatility of modular repeating crease patterns in the design of self-folding shapes~\cite{ABDR11},
our focus here is on finding forcing sets for flat-foldable origami whose folded shape lies flat in the plane, using a restricted set of folds based on the \emph{Miura-ori} pattern
(see \figref{StdMiura}a). This is a common crease pattern in origami that has found applications ranging from tourist maps to solar panel arrays for satellites~\cite{Hull06}.

In this paper,
we develop efficient algorithms for determining a minimum forcing set of a Miura-ori map, and for deciding whether a given set of creases is forcing or not. Given a partial mountain/valley assignment to a subset of creases of a Miura-ori map, we also determine  whether the assignment domain can be extended to cover all creases such that the resulting pattern is locally flat-foldable.
Additionally, we show that, for a Miura-ori crease pattern with $m \times n$ cells and a flat-foldable assignment of mountain/valley folds to its creases, a forcing set includes at least $m+n-2$ creases and at most $\lceil mn/2 \rceil$ creases. We establish a lower bound of $mn/2$ on the size of a forcing set for the standard Miura-ori mountain-valley assignment (described in \autoref{sec:definitions}), so from this perspective the standard  Miura-ori is the ``worst'' Miura.  
These results make use of previous work characterizing when a crease pattern at a node 
is locally flat-foldable~\cite{Hull03CountingMV} (detailed in \autoref{sec:localfold}) and a novel correspondence between flat-foldable Miura-ori maps and $3$-colorings of grid graphs recently developed in~\cite{HullGinepro} (detailed in \autoref{sec:gridgraph}).

Although we are the first to consider the minimum forcing set problem,
there is much related work on flat-foldable origami. 
Arkin et al.~\cite{Arkin04} give a linear time algorithm for determining if
a one dimensional crease pattern is flat-foldable.  In~\cite{BerHay-SODA-96}, Bern and Hayes show how to determine in linear time whether a general
crease pattern has a mountain-valley assignment for which every node 
is locally flat-foldable. This is necessary but not sufficient to ensure that the entire crease pattern is flat-foldable. Deciding whether a crease pattern is flat-foldable is NP-hard~\cite{BerHay-SODA-96}.  For crease patterns consisting of a regular $m \times n$ grid of squares, the complexity of deciding whether a given mountain-valley assignment can be folded flat (the \emph{map folding problem}) remains open~\cite{DOR07}, although recent progress has been made on $2 \times n$ grids~\cite{Morgan12} and on testing for valid
linear orderings of the faces in a flat-folding~\cite{NW13}. In~\cite{Hull03CountingMV}, Hull gives upper and lower bounds on the number of flat-foldable mountain-valley assignments on a single-node 
crease pattern.

\subsection{Definitions}
\label{sec:definitions}

\0{Define ``flat origami'', or include reference to the appropriate section.  Maybe our readers will know before our definition in line 53 that 0 is not in the range of $\mu$---that when something is ``unfolded back to a piece of paper'', it ceases to be flat origami...but we could make that clearer anyway.}
A \emph{crease pattern} is a finite planar straight-line graph drawn on a piece of paper, with \emph{edges}
corresponding to line segments of the crease pattern, and \emph{nodes} connecting two or more edges. 
For clarity, we will refer to the edges of the planar graph simply as \emph{creases}. 
A crease pattern is \emph{flat-foldable} if it is the crease pattern of some flat origami; that is, if there is a way of folding the paper so that the folds lie along the creases and the folded shape again lies flat in a plane~\cite{BerHay-SODA-96}.

For a given crease pattern $C$, let $E(C)$ denote the set of creases 
of the corresponding planar graph.  
A {\em mountain-valley assignment} on $C$, or MV \emph{assignment} for short, is a function $\mu: E(C)  \rightarrow \{-1, +1\}$, where $-1$ indicates a valley crease and $+1$ indicates a mountain crease.  An MV assignment restricts the way a crease pattern folds into some flat origami. Intuitively, if the origami is unfolded back to a piece of paper with one side up and one side down, a mountain fold points up and a valley fold points down. Given a foldable MV assignment $\mu$, a flat folding that conforms to $\mu$ is not necessarily uniquely determined, i.e., there may be different ways to fold, and different flat states in terms of paper stacking and tucking, all of which respect $\mu$.
In this work we will not be paying attention to different layering states of the paper. 
An MV assignment that can be folded to form a flat origami is called \emph{foldable}. 
In order for the crease pattern to fold flat, the function $\mu$ must obey certain rules, which will be discussed in \autoref{sec:localfold}. A crease pattern $C$ along with an assignment $\mu$ on $C$ form a \emph{folding pattern} $(C, \mu)$. 

Given a folding pattern ($C, \mu$), we say that a subset $F$ of $C$ is {\em forcing} if the only flat-foldable MV assignment on $C$ that agrees with $\mu$ on $F$ is $\mu$ itself. This means that, if each crease $a \in F$ is assigned the value $\mu(a)$, then each crease $b \in C \setminus F$ must be assigned the value $\mu(b)$, in order to produce a foldable MV assignment. 
A forcing set $F$ is called {\em minimum} if there is no other forcing set with size less than $|F|$.  

An $m \times n$ Miura-ori crease pattern (see \figref{StdMiura}a) refers to a rectangular sheet of paper divided by equidistant parallel (\emph{e.g.}, horizontal) creases into $m$ strips, each of which is further divided by equidistant parallel creases (oblique to the first set---\emph{e.g.}, nearly vertical) into $n$ quadrilaterals, such that the crease pattern on one strip is the mirror image of the crease pattern on any adjacent strip.  
Except at the strip ends, all cells are congruent parallelograms containing a pair of acute node 
angles $\alpha$ and obtuse node 
angles $\pi-\alpha$.  
In the \emph{standard} Miura-ori MV assignment shown in \figref{StdMiura}a, each zig-zag crease path is monochrome (all mountain creases or all valley creases), but adjacent zig-zags are of the opposite orientation.  
Each straight crease path alternates: a mountain crease; a valley crease; repeat. 
 
\begin{figure}[tbp]
\centering
\begin{tabular}{c@{\hspace{0.02in}}c@{\hspace{0.02in}}c}
\includegraphics[width=0.38\linewidth]{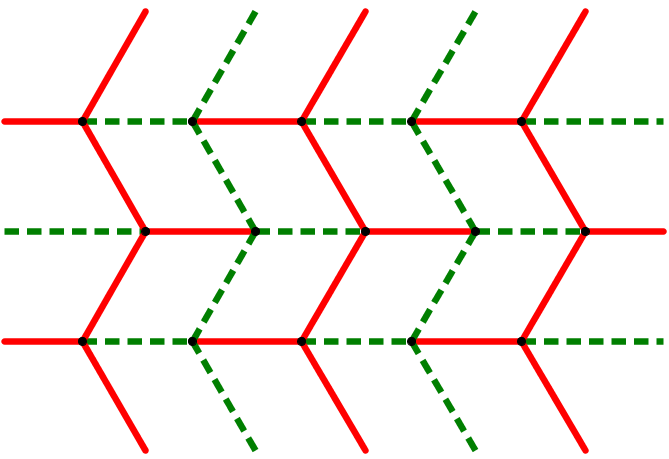}\0{StandardMiura} & 
\raisebox{0.0in}{\includegraphics[width=0.3\linewidth]{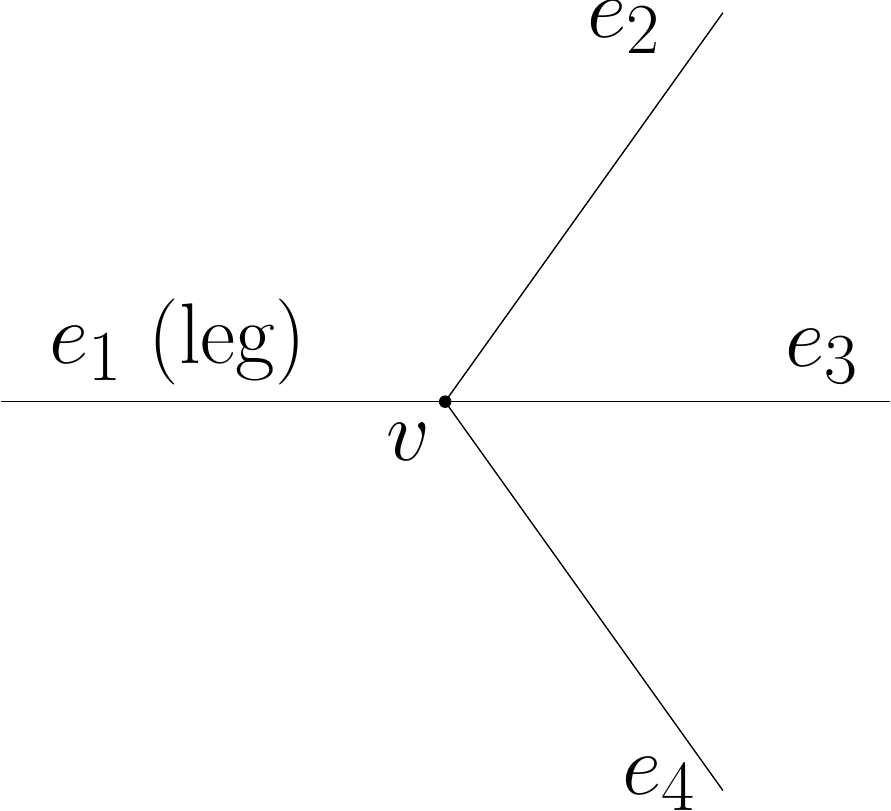}} &
\raisebox{0.05in}{\includegraphics[width=0.28\linewidth]{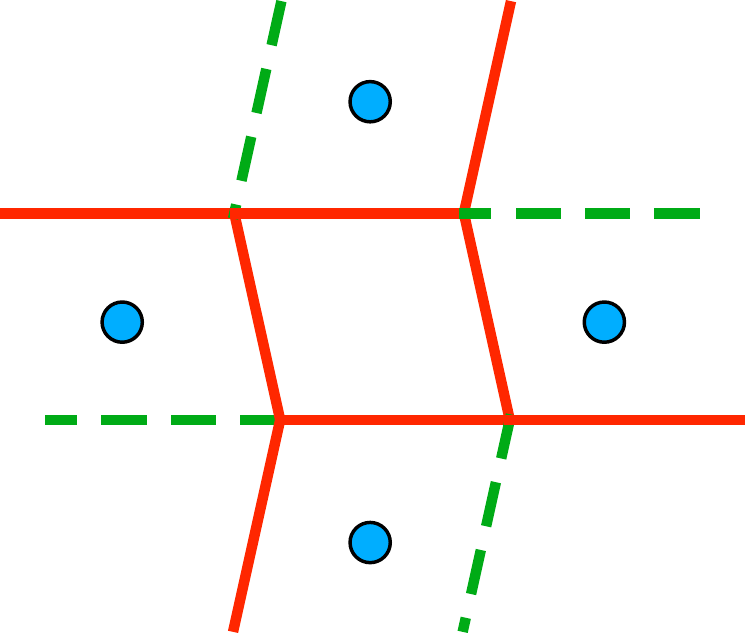}} \\
(a) & (b) & (c)
\end{tabular}
\caption{(a) A $4 \times 6$ Miura map with the standard MV assignment.  Solid red creases 
represent mountain folds, while dashed green creases represent valley folds. (b) Creases incident on a single node $v$. (c) A non-standard MV assignment to a Miura-ori that is flat-foldable at each node but not globally.  Assuming the center cell is the top layer, each cell marked with a blue point must be below the (clockwise) next one.}
\label{fig:StdMiura}
\end{figure}

\section{Local Flat Foldability}
\label{sec:localfold}

We begin with some results from the combinatorial origami literature that will prove useful in our discussion.  More details can be found in \cite{Hull03CountingMV}.

\begin{theorem}[Maekawa's Theorem]
At any node in a flat-foldable crease pattern, the difference between the number of mountain and valley creases is always two.
\end{theorem}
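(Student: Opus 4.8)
The plan is to give the classical ``walk around the node'' argument: encircle $v$ by a small loop in the paper and compare the total turning of the loop's folded image computed globally versus computed crease by crease.

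Fix a node $v$ of degree $d$ with incident creases $c_1,\dots,c_d$ in counterclockwise cyclic order, and let $s_i$ be the paper sector of angle $\theta_i>0$ lying between $c_i$ and $c_{i+1}$ (indices modulo $d$), so $\sum_i\theta_i=2\pi$. In a flat folding each $s_i$ is carried into the plane by a rigid motion $R_i$; since the two rigid motions of adjacent sectors agree on the shared crease but correspond to a flat ($\pm\pi$) fold, $R_i$ is $R_{i-1}$ composed with the reflection across that crease. Writing $o_i=+1$ if $R_i$ is orientation preserving and $o_i=-1$ otherwise, this gives $o_{i+1}=-o_i$; in particular $d$ is even, and hence $M-V=d-2V$ is even, where $M,V$ are the numbers of mountain and valley creases at $v$.

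Now let $\gamma$ be a small circle around $v$ and $\gamma^{*}$ its image in the folded state, a closed planar curve whose total turning (signed curvature along the smooth arcs, plus exterior angles at the corners) is $2\pi$ times its rotation number. Computing locally: the portion of $\gamma$ inside $s_i$ is a circular arc of central angle $\theta_i$, so its image contributes $o_i\theta_i$ to the turning; and at each crease $c_i$ the image makes a flat U-turn contributing an exterior angle of $\pm\pi$, where a short case analysis at the crease --- using that the two sectors meeting at $c_i$ have opposite orientation --- shows that, once the global orientation is fixed, every mountain contributes $+\pi$ and every valley $-\pi$ (or uniformly the reverse). Hence
\[
\sum_{i=1}^{d} o_i\theta_i \;+\; (M-V)\pi \;=\; 2\pi\cdot(\text{rotation number of }\gamma^{*}).
\]
Because the $o_i$ alternate and $d\ge 2$, not all $o_i\theta_i$ share a sign, so $\bigl|\sum_i o_i\theta_i\bigr|<2\pi$; combined with $M-V$ even and the right-hand side being an integer multiple of $2\pi$, this forces $\sum_i o_i\theta_i=0$ and $M-V$ to equal twice the rotation number. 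Since the folded image of a small loop around an interior point of a genuine flat folding has rotation number $\pm1$, we get $|M-V|=2$. (As a bonus, $\sum_i o_i\theta_i=0$ is exactly Kawasaki's alternating-angle condition, obtained for free.)

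The routine part is the two-way turning count itself. The steps I expect to be the crux are: (i) the sign bookkeeping at the creases, i.e.\ checking that with alternating sector orientations a mountain and a valley fold really do contribute opposite $\pm\pi$ turns; and (ii) showing the rotation number of $\gamma^{*}$ is exactly $\pm1$, which is where non-self-intersection of the folded paper must be invoked. An alternative route that sidesteps (ii) is induction on $d$: in a smallest sector the two bounding creases must carry opposite MV labels (the ``crimp'' lemma), and folding that crimp flat yields a flat-foldable node of degree $d-2$ with one fewer mountain and one fewer valley, the base case $d=2$ (a single straight fold line through $v$, so $M-V=\pm2$) being immediate; in that route the crimp lemma is the main obstacle.
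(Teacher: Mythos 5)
The paper does not prove Maekawa's Theorem at all: it is imported verbatim from the combinatorial-origami literature (the citation to Hull's work on counting MV assignments), so there is no in-paper argument to compare against. Your turning-number argument is in fact the classical proof from that literature, and its bookkeeping is correct: the alternation $o_{i+1}=-o_i$ forces $d$ even and $M-V$ even; the strict bound $\bigl|\sum_i o_i\theta_i\bigr|<2\pi$ is right because the positive and negative parts are each strictly between $0$ and $2\pi$; and the conclusion $M-V=2\cdot(\text{rotation number})$ follows. Your point (i), that after fixing a global orientation every mountain contributes $+\pi$ and every valley $-\pi$ to the exterior-angle sum, is a genuine local computation but does go through.

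The one place where your writeup is an assertion rather than a proof is exactly the step you flag as (ii): that the rotation number of $\gamma^{*}$ is $\pm 1$. This is where flat-foldability (as opposed to mere local consistency of the reflections $R_i$) must be used, and it is the only nontrivial input --- without it you only get $M-V\in 2\mathbb{Z}$, which is weaker. The standard way to close it is to use the existence of a consistent layer ordering: perturb the folded disk to have small positive thickness respecting that ordering, so that $\gamma^{*}$ becomes a simple closed curve, and then apply the Hopf Umlaufsatz to conclude the rotation number is $\pm 1$. As written, your proof has a hole there; you have correctly located it but not filled it. The same caveat applies to your alternative induction: the crimp lemma (that the two creases bounding a strictly smallest sector carry opposite MV labels) likewise needs a non-self-intersection argument, so it relocates rather than removes the difficulty. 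Either route, once that single step is supplied, is a complete and standard proof; for the purposes of this paper, where only the degree-4 Miura node is used and the theorem is cited, your outline is an acceptable substitute provided you either cite the rotation-number fact or carry out the thickening argument explicitly.
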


For the Miura-ori, this means that each node $v$ will have either 3 mountains and 1 valley or vice-versa.  However, the geometry of these nodes impose further structure.  As noted, there are four creases incident to $v$; exactly one of these, which we designate $e_1$---and affectionately call  the \emph{bird leg}, or just \emph{leg}---is separated by obtuse angles from its nearest neighbors. 
We call the others collectively \emph{toes}, but individually $e_2$, $e_3$, and $e_4$, so that $e_3$ forms a straight angle with $e_1$, as seen in \figref{StdMiura}b.
In this sense, $e_3$ is the middle toe, while $e_2$ and $e_4$ are the lateral toes.
Notice that by Maekawa's Theorem, we have $|\sum_i \mu(e_i)|=2$. 

\begin{theorem}[Bird's Foot Theorem]\label{birds foot}
Using the notation in \figref{StdMiura}b, any Miura-ori flat-foldable node must have $\mu(e_1)=\sum_{i=2}^4 \mu(e_i)$.
\end{theorem}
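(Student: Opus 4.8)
The plan is to read the identity off from two standard facts about flat-foldable nodes: Maekawa's Theorem (stated above) and the ``big-little-big'' angle lemma coming out of the local flat-foldability analysis of~\cite{Hull03CountingMV}. In the form I need it, the latter says: at a flat-foldable node, if two \emph{consecutive} sectors have equal angle and each is strictly smaller than the sector on its far side, then the three creases that bound this pair of sectors are not all assigned the same MV value. (The textbook version---if a single sector is strictly smaller than both of its neighbours then the two creases bounding it receive opposite values---is the no-tie special case.)

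First I would record the sector structure of a Miura node, using the notation of \figref{StdMiura}b. Read cyclically starting at the leg $e_1$, the four sectors are $\pi-\alpha,\ \alpha,\ \alpha,\ \pi-\alpha$: the two obtuse sectors flank $e_1$ (this is exactly what it means for $e_1$ to be the leg), the two acute sectors are consecutive and are bounded by the three toes $e_2,e_3,e_4$ with the middle toe $e_3$ between them, and since $\alpha<\pi/2$ each acute sector is strictly smaller than the obtuse sector next to it. (Kawasaki's condition $(\pi-\alpha)-\alpha+\alpha-(\pi-\alpha)=0$ holds automatically, so Maekawa and big-little-big are the only constraints on $\mu$.) Applying the big-little-big lemma to the pair of tied acute sectors then yields the structural fact I want: at a flat-foldable Miura node, $\mu(e_2),\mu(e_3),\mu(e_4)$ are not all equal.

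It then remains a one-line computation. Maekawa gives $\mu(e_1)+\mu(e_2)+\mu(e_3)+\mu(e_4)=\pm 2$. Since $\mu(e_2),\mu(e_3),\mu(e_4)\in\{-1,+1\}$ are not all equal, exactly two of them coincide, so $s:=\sum_{i=2}^{4}\mu(e_i)\in\{-1,+1\}$ and $\mu(e_1)$ equals $2-s$ or $-2-s$; as $\mu(e_1)$ itself lies in $\{-1,+1\}$, only one of these qualifies (if $s=1$ they are $1$ and $-3$; if $s=-1$ they are $3$ and $-1$), namely $s$. Hence $\mu(e_1)=s=\sum_{i=2}^{4}\mu(e_i)$, which is the claim.

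The step that needs care---and the place I expect the real work to be---is the tied-sector form of the big-little-big lemma, since the usual statement requires a strict local minimum. I would either cite it directly from the local flat-foldability characterisation of~\cite{Hull03CountingMV}, or argue it in place: folding the paper first along the middle toe $e_3$ superimposes the two $\alpha$-sectors and carries $e_4$ exactly onto $e_2$, and a short layering argument on the partially folded model shows that $\mu(e_2)=\mu(e_3)=\mu(e_4)$ would force three like-oriented folds to nest in a way that no stacking order permits. With that lemma in hand, the arithmetic above finishes the proof.
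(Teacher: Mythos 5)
Your proposal is correct and follows essentially the same route as the paper: Maekawa's Theorem reduces the claim to ruling out the single bad case in which the leg is the lone minority crease (equivalently, $\mu(e_2)=\mu(e_3)=\mu(e_4)$), and that case is excluded by the local geometry of the two equal acute sectors. The paper argues this exclusion directly with a figure (the acute sectors would have to contain the obtuse ones, crumpling the paper), while you package the identical obstruction as the tied-angle form of the big-little-big lemma from~\cite{Hull03CountingMV}; your crimp-along-$e_3$ sketch is the same geometric picture, and your closing arithmetic is a slightly more formal version of the paper's case analysis.
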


In other words, the Bird's Foot Theorem states that in any Miura-ori flat-foldable node, the leg of the foot (crease $e_1$) must have the same MV parity as the majority of the toes (creases $e_2$, $e_3$, and $e_4$).  The reason for this is that the only other possibility (via Maekawa's Theorem) would be for, say, $e_1$ to be a valley and all the other creases to be mountains. This is impossible to fold flat because it would require the acute node angles $\alpha$ to contain the obtuse node angles $\pi-\alpha$, which would require the paper to crumple or tear; see \figref{BirdFoot}.

\begin{figure}[tbp]
\centering
\begin{tabular}{c@{\hspace{0.05in}}c}
\includegraphics[width=0.4\linewidth]{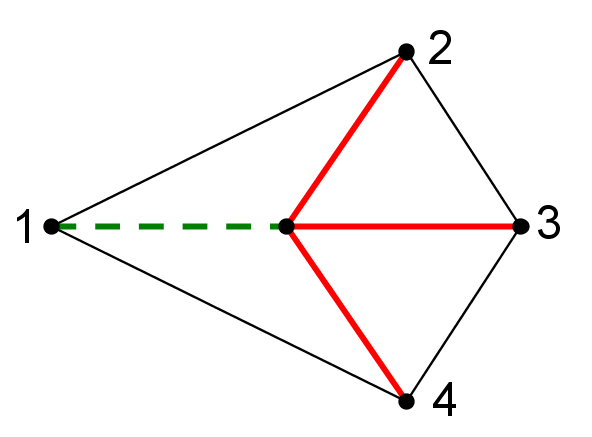} & 
\raisebox{0.2in}{\includegraphics[width=0.55\linewidth]{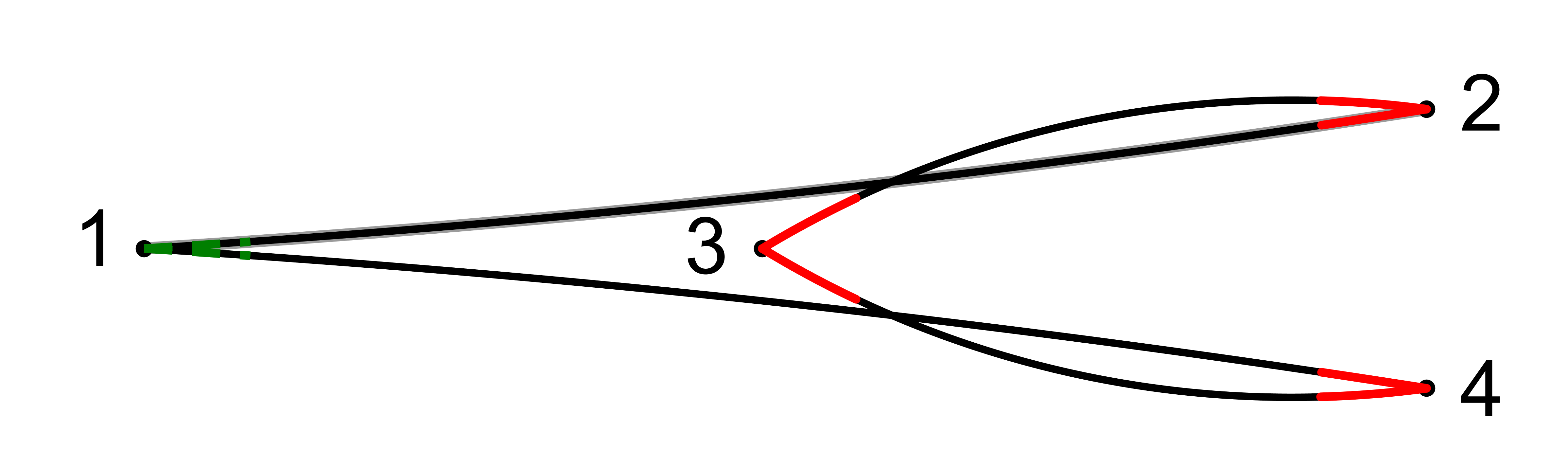}} \\
(a) & (b) 
\end{tabular}
\caption{To illustrate the contradiction proving the Bird's Foot Theorem: (a) A quadrilateral cut out of a Miura map with a valley at $e_1$ and mountains at $e_2$, $e_3$, and $e_4$. (b) When we fold this flat and point its central node away from us, we see its mountain folds as convex turns and its valley fold as a concave turn.  Now the pair of long creases collide with the short creases.}
\label{fig:BirdFoot}
\end{figure}
%
With the standard MV assignment in particular, the middle toe is the odd one out: $\mu(e_1)=\mu(e_2)=\mu(e_4)=-\mu(e_3)$. 

\0{\begin{cor}
\label{cor:stdflip}
If $\mu$ is a locally flat-foldable MV assignment, then at each node we have $\mu(e_2)=\mu(e_4)$ if and only if $\mu(e_1)=-\mu(e_3)$.  In either case, $\mu(e_1)=\mu(e_2)=-\mu(e_3)=\mu(e_4)$.
\end{cor}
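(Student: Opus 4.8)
The plan is to reduce everything to the combined constraints of Maekawa's Theorem and the Bird's Foot Theorem (\autoref{birds foot}), and then to track the single ``minority'' crease at the node. By Maekawa's Theorem, among the four creases $e_1,e_2,e_3,e_4$ exactly three share one sign and exactly one has the opposite sign; call the latter the \emph{minority crease}. The first key step is to observe that the Bird's Foot Theorem forces the minority crease to be a toe, never the leg: since $\mu(e_1)=\sum_{i=2}^4\mu(e_i)$ and the right-hand side always equals the majority sign among $\mu(e_2),\mu(e_3),\mu(e_4)$, the leg $e_1$ agrees in sign with at least two of the toes, so it cannot be the unique odd one out. (This is exactly the configuration ruled out in \figref{BirdFoot}.)

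With that in hand, the biconditional becomes a matter of locating the minority crease among the three toes. I would argue that $\mu(e_2)=\mu(e_4)$ holds precisely when neither lateral toe is the minority crease, i.e.\ precisely when the minority crease is the middle toe $e_3$. On the other hand, since every crease except the minority one carries the majority sign---and, by the previous step, the leg $e_1$ does carry the majority sign---the condition $\mu(e_1)=-\mu(e_3)$ holds precisely when $e_3$ is the minority crease. Chaining these two equivalences gives $\mu(e_2)=\mu(e_4)\iff\mu(e_1)=-\mu(e_3)$. Finally, when either (hence both) of these conditions holds, $e_3$ is the minority crease, so $e_1,e_2,e_4$ all carry the majority sign while $e_3$ carries the opposite one, which is exactly the asserted chain $\mu(e_1)=\mu(e_2)=-\mu(e_3)=\mu(e_4)$.

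As a sanity check I would also verify the statement by brute-force enumeration of the eight sign patterns with $|\sum_i\mu(e_i)|=2$, discarding the two that violate the Bird's Foot Theorem (leg against all toes) and confirming that each of the remaining six satisfies the stated equivalence. I do not expect any genuine obstacle here: this is a small finite case analysis, and the only point requiring care is applying the Bird's Foot Theorem in the correct direction---it constrains the leg to follow the toe majority, so the ``all mountains but the leg'' and ``all valleys but the leg'' configurations are the forbidden ones. Everything else is bookkeeping over which toe plays the role of the minority crease.
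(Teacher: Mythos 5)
Your proof is correct and rests on exactly the same two ingredients as the paper's argument: Maekawa's Theorem gives the 3--1 sign split, and the Bird's Foot Theorem forces the leg to agree with the toe majority, so the unique minority crease must be a toe and both conditions in the biconditional reduce to ``the minority is $e_3$.'' The paper phrases this as a short direct argument for each implication rather than via your ``minority crease'' bookkeeping, but the substance is identical, so no further comparison is needed.
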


For the forward implication, when the lateral toes agree, they constitute a majority of the toes, so the leg must agree with them and the middle toe must not. 
For the reverse implication, when the leg disagrees with the middle toe, it must agree with both lateral toes in order to agree with a majority of the toes. }

\begin{cor}
\label{cor:stdflip}
If $\mu$ is a locally flat-foldable MV assignment, then at any node where either $\mu(e_2)=\mu(e_4)$ or $\mu(e_1)=-\mu(e_3)$, we have $\mu(e_1)=\mu(e_2)=-\mu(e_3)=\mu(e_4)$.
\end{cor}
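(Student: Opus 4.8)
The plan is to work entirely at a single node and feed each of the two hypotheses into the Bird's Foot Theorem (\autoref{birds foot}), which for a Miura node already subsumes Maekawa's Theorem. Write $s_i=\mu(e_i)\in\{-1,+1\}$, so that the Bird's Foot identity reads $s_1=s_2+s_3+s_4$, and the target conclusion is the single pattern $s_1=s_2=-s_3=s_4$. It therefore suffices to show that each hypothesis, combined with $s_1=s_2+s_3+s_4$, pins down exactly this pattern. Since everything takes values in $\{-1,+1\}$, each case reduces to a one-line arithmetic check, so I do not expect any real obstacle beyond keeping the bookkeeping clean.

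First suppose $s_2=s_4$. Substituting into the Bird's Foot identity gives $s_1=2s_2+s_3$. The left-hand side has absolute value $1$ while $2s_2$ has absolute value $2$, so the only consistent choice is $s_3=-s_2$, which then forces $s_1=s_2$. Hence $s_1=s_2=-s_3=s_4$, as claimed.

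Next suppose $s_1=-s_3$. Substituting into $s_1=s_2+s_3+s_4$ and simplifying gives $s_2+s_4=-2s_3$. The right-hand side has absolute value $2$, so $s_2+s_4$ cannot be $0$; thus $s_2=s_4$, and matching signs forces $s_2=s_4=-s_3$. Together with $s_1=-s_3$ this again yields $s_1=s_2=-s_3=s_4$. As a byproduct, this argument also shows that under local flat-foldability the two hypotheses $s_2=s_4$ and $s_1=-s_3$ are equivalent: each holds precisely when the node realizes the displayed pattern.

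The only point that needs any care is to avoid invoking Maekawa's Theorem and the Bird's Foot Theorem redundantly: the Bird's Foot identity alone, being an equality between a value in $\{-1,+1\}$ and a value in $\{-3,-1,1,3\}$, already encodes all the parity information used above, so each case collapses to the indicated arithmetic. (Equivalently, one could simply enumerate the six locally flat-foldable MV assignments at a Miura node and observe that the two with $s_2=s_4$ are exactly the two with $s_1=-s_3$, and both have the stated form.)
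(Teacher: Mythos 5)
Your proof is correct and follows essentially the same route as the paper's: both arguments are a two-case application of the Bird's Foot Theorem at a single node, with your sign arithmetic on $s_1=s_2+s_3+s_4$ being the algebraic form of the paper's ``the leg must agree with the majority of the toes'' reasoning. Your observation that the identity already subsumes Maekawa's Theorem, and that the two hypotheses are in fact equivalent under local flat-foldability, matches remarks the paper itself makes.
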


Note that a Miura-ori that is locally flat-foldable at each node under a MV assignment $\mu$ is not necessarily
globally flat-foldable under $\mu$. One example supporting this claim is depicted in \figref{StdMiura}c.

%

\section{Equivalence to 3-Colorings of Grid Graphs}
\label{sec:gridgraph}

A key tool for what follows is the existence of a bijection between the locally flat-foldable Miura-ori MV assignments and the vertex 3-colorings of grid graphs with one vertex pre-colored.  We explain this equivalence here and refer the reader to \cite{HullGinepro} for the proof.

\begin{figure}[htbp]
\centering
\includegraphics[width=0.6\linewidth]{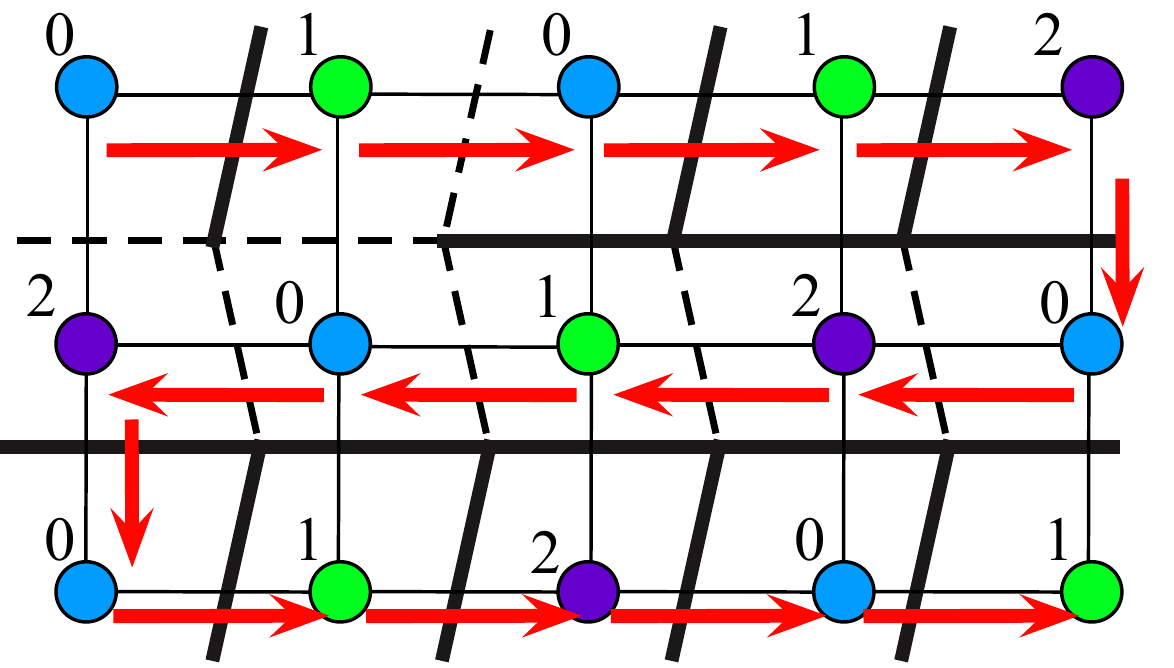}  
\caption{An example of the equivalence between locally flat-foldable Miura-ori MV assignments and 3-vertex colorings of a grid graph with one vertex pre-colored.}
\label{fig:StdMiura2}
\end{figure}


Think of a Miura-ori crease pattern as a plane graph and take its planar dual, $G$, 
ignoring the outside face.  Then $G$ is 
a grid graph with one vertex in each cell\0{parallelogram} of the crease pattern, 
and edges connecting vertices in adjacent cells. 
We orient our Miura-ori crease pattern so that the top row of nodes have their bird legs pointing left.  Suppose we are given a locally flat-foldable MV assignment $\mu$.  We use this to generate  a 3-vertex coloring $\col$ of $G$ as follows.  Color the upper-left vertex with color 0.  Then travel along the edges of $G$ starting at the upper-left vertex, traveling to the right along the top row, then going down one vertex, then traveling to the left along the second row, then down one vertex, then to the right again and so on.  See \figref{StdMiura2}, where this path is marked by the red arrows.  We use this path to construct the 3-coloring $\col$ recursively;  assume vertex $v_i$ on the path has been given color $\col(v_i)\in \{0,1,2\}$.  Let $v_{i+1}$ be the next vertex on the path and let $L_i$ be the crease of the Miura-ori between $v_i$ and $v_{i+1}$.  Then define $v_{i+1}$'s color using $\col(v_{i+1}) = \col(v_i)+\mu(L_i)\mod 3.$ 
An example of this correspondence is shown in \figref{StdMiura2}.  Furthermore, the process is reversible:  given a 3-vertex coloring of the graph $G$ with the upper-level vertex colored 0, we can generate a MV assignment $\mu$ for creases $L_i$ between consecutive vertices $v_i$ and $v_{i+1}$ on the red path using
$$\mu(L_i) = \left\{ \begin{array}{cl}
1 & \mbox{if }\col(v_{i+1})-\col(v_{i})=1\mod 3 \\
-1 & \mbox{if }\col(v_{i+1})-\col(v_{i})=2\mod 3.\end{array}\right.$$ 
This extends uniquely to a full MV assignment $\mu$ which will be locally flat-foldable.  See \cite{HullGinepro} for details.  

This correspondence between the MV assignment $\mu$ and the coloring $\col$ gives us a mapping between a forcing set $F$ of the Miura-ori pattern relative to $\mu$ and a {\em forcing subset of edges $S$ of $G$ relative to $\col$}. The subset $S$ would thus have the property that the only 3-coloring of the vertices of $G$ that agrees with $\col$ on the vertices incident to $S$ is $\col$ itself. 

\section{Tight Bounds for Standard Miura-ori}
\label{sec:standardmiura}
In this section we are concerned with minimum forcing sets for the standard MV assignment on a Miura-ori such as the one shown in \figref{StdMiura}a.  
%
%
%
%
It is helpful to think of the Miura-ori as a perturbed rectangular array of unit squares, which we will now tile with $2 \times 1$ rectangles (dominoes). 

\begin{figure}[htbp]
\centering
\begin{tabular}{c@{\hspace{0.1in}}c}
\includegraphics[width=0.45\linewidth]{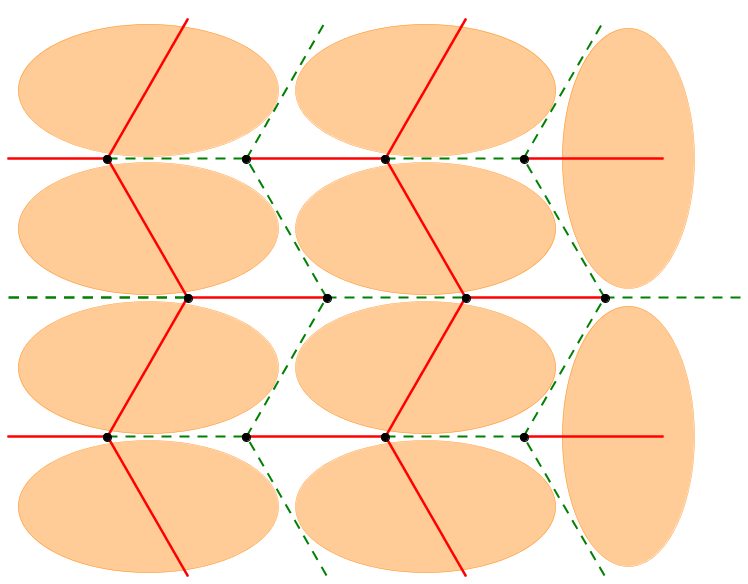} & 
\includegraphics[width=0.45\linewidth]{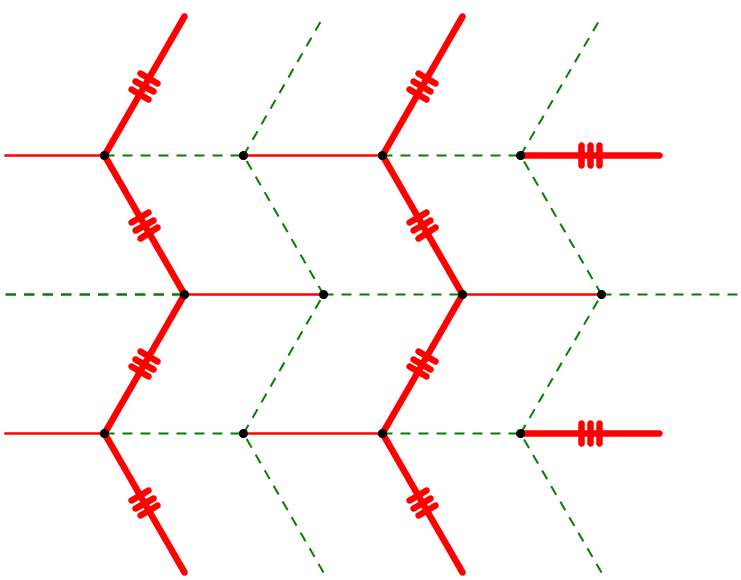} \\
(a) & (b) 
\end{tabular}
\caption{(a) A tiling by dominoes rendered as ellipses.  (b) The forcing set for the tiling from (b) includes the marked creases.}
\label{fig:domino}
\end{figure}


We assume at first that the Miura-ori has an even number of columns---which is to say, an odd number of zig-zags. 
We tile it with horizontal dominoes (see the left four columns of \figref{domino}a).
Each domino crosses exactly one crease of the Miura-ori.  
With this particular tiling and MV assignment, we see that every mountain zig-zag crease path (and no other crease) is completely covered by dominoes; let $F$ consist of these creases. 
By \autoref{cor:stdflip}, $F$ is a forcing set for $\mu$.

Now suppose that the Miura-ori has an odd number of columns, but an even number of rows. 
We tile the rightmost column (``column $n$'') with vertical dominoes, and the rest of the grid as before. 
We have again covered every mountain zig-zag crease path, and also the horizontal mountain creases (middle toes) in column $n$. 
With this choice of $F$, we see that $\mu$ is determined for all creases except, perhaps, the valley creases in column $n$ and its bounding zig-zag. 
Since $\mu$ is forced in column $n-1$, the creases that appear there as valleys are indeed forced to be valleys. 
These give us a valley leg and mountain toes at alternate nodes adjacent to column $n$---which, by \autoref{cor:stdflip}, forces $\mu$ on the rightmost valley zig-zag and therefore throughout the map.

When the Miura-ori has an odd number of rows and columns, we omit the bottom-right cell from the tiling, but otherwise proceed as above---tiling column $n$ with vertical dominoes, and the rest with horizontal dominoes. 
The collection of mountain creases corresponding to this domino tiling is sufficient to force $\mu$ everywhere except for the two creases bounding the bottom-right cell.  Around their common node, we know that one toe is a mountain crease and another a valley, but now flat-foldability requires only that the remaining toe and leg agree with each other: they could be both mountains or both valleys.  Thus, we add one of these creases to $F$ in order to force the standard MV assignment across the grid.

Suppose that $F$ is a forcing set corresponding to a domino tiling in which some $2 \times 2$ square is covered by a pair of parallel dominoes. 
Let $F'$ be the crease set corresponding to the domino tiling that results when these two parallel dominoes are flipped: turn a horizontal pair vertical, or vice-versa. 
\autoref{cor:stdflip} implies that $F'$ is also a forcing set. 
Furthermore, all domino tilings of the Miura grid are flip-connected~\cite{S+95}, so any domino tiling of the grid (in which $mn$ is even) corresponds to a forcing set of size $\frac{mn}{2}$. 
Thus we have the following result. 


\begin{theorem}
When $mn$ is even, an $m \times n$ Miura-ori crease pattern with the standard MV assignment has a minimum forcing set of $\frac{mn}{2}$ creases.
\end{theorem}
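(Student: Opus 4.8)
The upper bound is already established: the domino-tiling construction above exhibits, for every $m\times n$ grid with $mn$ even, a forcing set of size exactly $\frac{mn}{2}$. So the plan is to prove the matching lower bound --- that every forcing set for the standard MV assignment $\mu$ uses at least $\frac{mn}{2}$ creases --- after which the theorem is immediate.

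I would run the argument through the correspondence of \autoref{sec:gridgraph}. Orient the map with the top row of bird legs pointing left, let $G$ be the $m\times n$ grid graph dual to the cells, and let $\col$ be the $3$-coloring of $G$ that corresponds to $\mu$. The first step is to identify $\col$ explicitly. Using \autoref{cor:stdflip} (the standard pattern's ``middle toe is the odd one out'' relation $\mu(e_1)=\mu(e_2)=\mu(e_4)=-\mu(e_3)$) together with the boustrophedon rule $\col(v_{i+1})=\col(v_i)+\mu(L_i)\bmod 3$, a short induction over the cells in the order the path visits them shows that $\col$ is simply the proper \emph{two}-coloring of $G$ with the corner cell colored $0$ (so the cell in row $i$, column $j$ gets color $(i+j)\bmod 2$). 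The one point requiring care is that the bird legs alternate between pointing left and pointing right as one passes from one row of interior nodes to the next, which follows from the glide-reflection symmetry of the Miura pattern; this computation is also implicit in~\cite{HullGinepro}. The upshot is that $\col$ uses only two of the three colors, and since $G$ is connected and bipartite these two colors occupy exactly the two sides of its bipartition.

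Now I would use the key property of two-colorings: in any proper $3$-coloring of a connected bipartite graph that uses only two colors, \emph{every} vertex $v$ can be recolored with the unused third color without destroying properness, since all neighbors of $v$ lie on the opposite side of the bipartition and hence share the single ``other'' color, which is not the third color. Consequently, if $S\subseteq E(G)$ is a forcing edge set for $\col$ and some vertex $v$ other than the precolored corner is incident to no edge of $S$, then recoloring $v$ yields a proper $3$-coloring that still colors the corner $0$, still agrees with $\col$ on every vertex incident to $S$, but differs from $\col$ at $v$ --- contradicting that $S$ is forcing. Hence every forcing edge set of $G$ is incident to all $mn$ vertices except possibly the corner, and transporting this back through the cardinality-preserving correspondence between forcing crease sets and forcing edge sets of $G$, the same holds for forcing sets of creases. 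But a set of edges incident to at least $mn-1$ vertices has at least $\lceil (mn-1)/2\rceil$ edges, and since $mn$ is even this is $\frac{mn}{2}$; so every forcing set has at least $\frac{mn}{2}$ creases, matching the upper bound. I expect the only real work to be the first step --- verifying that $\col$ is this checkerboard two-coloring --- since the whole argument rests on $\col$ using only two colors; once that is in place, the rest is a one-line edge-covering count.
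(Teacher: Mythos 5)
Your proposal is correct and is essentially the paper's own argument transported into the grid-coloring language: your observation that the checkerboard two-coloring lets any single vertex be recolored with the unused third color is exactly the dual of the paper's remark that negating the standard assignment around the boundary of any single cell preserves local flat-foldability, and both then finish with the same covering count (each crease/edge serves at most two cells/vertices, hence at least $\frac{mn}{2}$ are needed). The only extra work in your route is verifying that the standard assignment corresponds to the two-coloring, which the paper sidesteps by arguing directly on the creases.
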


\begin{proof}
The domino tiling procedure presented above produces a forcing set of $\frac{mn}{2}$ creases. To
show that this is of minimum size, it is easily verified that negating the standard MV assignment along the boundary of a single cell (turning two mountains into valleys and two valleys into mountains, for an interior cell) produces a new MV assignment that satisfies conditions for local flat-foldability. 
Therefore, a set $F \subset E(C)$ must contain at least one crease of each cell in order to force $\mu$.  As a crease may serve this purpose in up to two cells, a forcing set must contain at least $\frac{mn}{2}$ creases.  This implies that all of the forcing sets based on domino tilings are minimum for the standard MV-assignment.
\end{proof}



For non-standard MV assignments, domino tilings are not guaranteed to produce forcing sets.
For example, consider a single bird's foot with the flat-foldable assignment
$\mu(e_1) = -1, \mu(e_2) = +1, \mu(e_3) = -1$, and $\mu(e_4) = -1$. A
domino tiling covers creases $e_2,e_4$, but fixing the MV assignments of these two creases
does not force the assignments of the remaining two creases; the other two creases could 
both be +1 or they could both be -1 for flat-foldability.  
Therefore, in the remainder of the paper we consider the more challenging problem of finding forcing sets for an arbitrary flat-foldable MV-assignment.

\section{Tight Bounds for Non-Standard Miura-ori}
\label{sec:nonstandard}
Throughout this section, we view an $m \times n$ 3-colored grid graph as an $m \times n$ grid of squares with colored vertices placed at their centers. We maintain the term \emph{edge} to refer to an edge connecting two adjacent square centers (see the dashed edges in \figref{uniform}a), but distinguish it from an \emph{arc}, which refers to an edge of the square grid (see the solid edges in \figref{uniform}a). Note that each edge crosses exactly one arc. We imagine each square having the same color as its center. Let $\col$ be a 3-coloring of a given grid graph $G$.  For simplicity, 
we define $\col(s)$ for a square $s$ to be the same as $\col(v)$, where $v$ is the center of~$s$. 

\begin{figure*}[htbp]
\centering
\includegraphics[width=\linewidth]{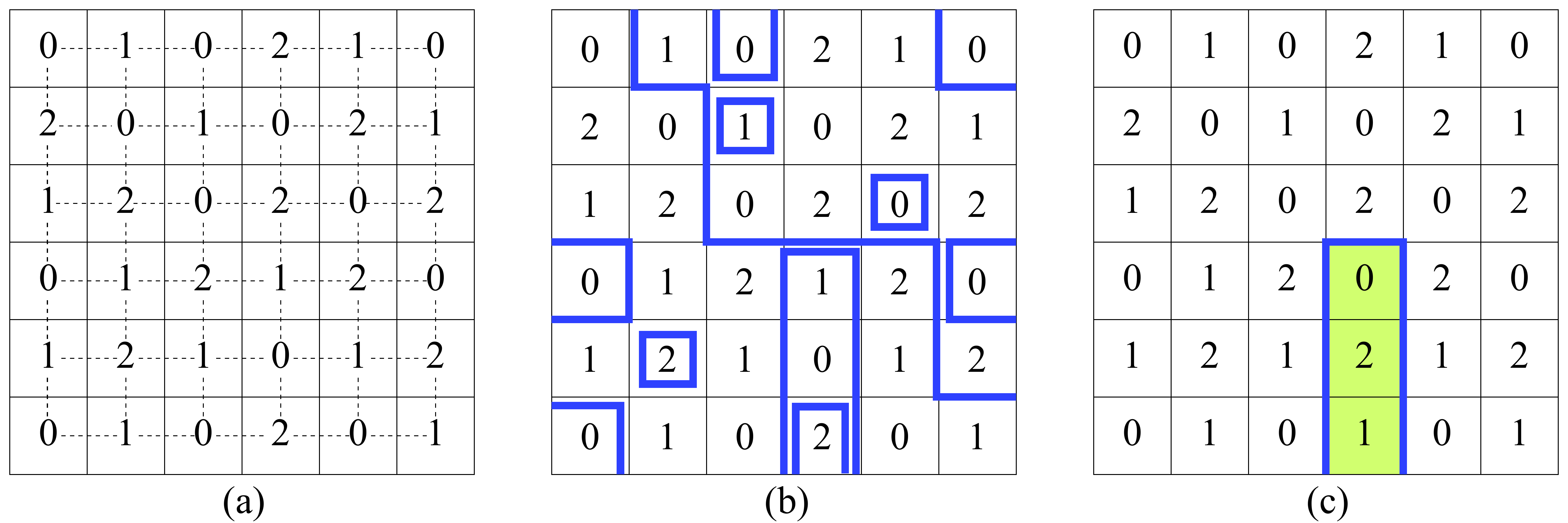}
\caption{Definitions (a) 3-colored grid: dashed segments are \emph{edges}; solid segments are \emph{arcs}; the colors are 0, 1, and 2 (b) Uniform curves marked by thick lines (c) Valid coloring obtained by adding $-1\pmod{3}$ to all colors on one side of the marked curve.}
\label{fig:uniform}
\end{figure*}

A \emph{uniform curve} $\C$ of $G$ is a boundary-to-boundary path or cycle of interior arcs with the property that the color difference (mod 3) is the same across every arc of $\C$. In other words, for any two edges $(a_1, b_1)$ and $(a_2, b_2)$ crossed by $\C$, with $a_1$ and $a_2$ on the same side of $C$, the equality 
$\col(b_1)-\col(a_1)=\col(b_2)-\col(a_2)$ holds.  \figref{uniform}b shows several\0{{the set of} the set is not fully represented; see def-uniform-piece-missing for some other uniform curves} uniform curves for the grid graph from \figref{uniform}a. The term ``uniform'' is used to remind the reader that the color difference is uniform across each arc of $\C$. As we shall later see, there is a one-to-one correspondence between the set of forcing edges and the set of uniform curves in $G$.
The following property follows immediately from the definition of a uniform curve. 

\begin{property}
\label{prop:uniform}
Let $x$ and $y$ be the two colors across an arc of a uniform curve $\C$, and let $d=x-y\in \{-1, +1\}$.  Now adding either $d\pmod{3}$ to all colors on the same side of $\C$ as $x$, or $-d\pmod{3}$ to all colors on the same side of $\C$ as $y$, yields another valid coloring.
\end{property}

\begin{cor}
\label{cor:uniform}
Any forcing set for $G$ must include an edge across every uniform curve in $G$.
\end{cor}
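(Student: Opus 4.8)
The plan is to derive \autoref{cor:uniform} directly from \autoref{prop:uniform} by a contrapositive argument. Suppose, for contradiction, that $S$ is a forcing set for $G$ (relative to the coloring $\col$) but there is some uniform curve $\C$ such that no edge of $S$ crosses $\C$. Fix an arc of $\C$, let $x$ and $y$ be the two colors across it, and let $d = x-y \in \{-1,+1\}$. By \autoref{prop:uniform}, the recoloring $\col'$ obtained by adding $d \pmod 3$ to every color on the $x$-side of $\C$ (and leaving the $y$-side unchanged) is again a valid 3-coloring of $G$.

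The key step is then to observe that $\col'$ agrees with $\col$ on both endpoints of every edge in $S$. Indeed, $\C$ partitions the squares of $G$ into two sides, and by hypothesis no edge of $S$ is crossed by $\C$; hence for each edge $(a,b) \in S$, the two squares $a$ and $b$ lie strictly on the same side of $\C$. If that side is the $y$-side, both $\col'(a)=\col(a)$ and $\col'(b)=\col(b)$ trivially; if it is the $x$-side, then $\col'$ adds the same constant $d \pmod 3$ to both $\col(a)$ and $\col(b)$, and since we only care that $\col'$ \emph{agrees with $\col$} on the vertices incident to $S$, we must be slightly more careful here: the definition of a forcing subset of edges requires that $\col'$ and $\col$ coincide on those vertices, not merely differ by a global shift. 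So the cleaner choice is to take whichever side of $\C$ does \emph{not} contain the upper-left (pre-colored) vertex of $G$ and shift that side; then $\col'$ still agrees with $\col$ on the pre-colored vertex, and $\col'$ is a genuinely different valid coloring because it differs from $\col$ on at least one square (any square adjacent to $\C$ on the shifted side, whose color changes by $d \not\equiv 0 \pmod 3$). Either way, we exhibit a valid coloring $\col' \neq \col$ that agrees with $\col$ on every vertex incident to an edge of $S$, contradicting the assumption that $S$ is forcing.

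I expect the main obstacle to be bookkeeping about which side of $\C$ gets shifted and ensuring $\col'$ is simultaneously (i) a valid coloring, (ii) distinct from $\col$, and (iii) in agreement with $\col$ on the pre-colored vertex and on all vertices incident to $S$. Conditions (i) and (ii) are handed to us by \autoref{prop:uniform} and by the fact that the color difference across $\C$ is nonzero mod 3; condition (iii) is exactly where the hypothesis ``$S$ crosses no arc of $\C$'' is used, together with the freedom in \autoref{prop:uniform} to shift either side. A minor edge case worth mentioning is when $\C$ is a boundary-to-boundary path versus a cycle, but in both cases $\C$ separates the plane (and hence the grid of squares) into two regions, so the same argument applies verbatim.
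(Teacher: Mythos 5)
Your overall strategy coincides with the paper's: assume no edge of the forcing set $S$ crosses some uniform curve $\C$, invoke \autoref{prop:uniform} to add a nonzero constant mod $3$ to all colors on one side of $\C$, and exhibit a second valid coloring that $S$ cannot rule out. The paper disposes of this in one sentence. The difficulty is in the patch you add. You correctly notice that if an edge $(a,b)\in S$ lies entirely on the shifted side, then $\col'(a)\neq\col(a)$ and $\col'(b)\neq\col(b)$, so $\col'$ does not literally ``agree with $\col$ on the vertices incident to $S$.'' But your fix---shift whichever side does not contain the pre-colored upper-left vertex---only secures agreement at that single vertex. It does nothing for edges of $S$ lying on the shifted side, and $S$ may well have edges on both sides of $\C$. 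Your concluding sentence, that ``either way'' $\col'$ agrees with $\col$ on every vertex incident to an edge of $S$, is therefore not established by your construction; under the literal vertex-color reading of agreement the statement you are trying to prove can in fact fail (take $S$ to be all edges not crossing $\C$, with at least two squares on each side), so no choice of which side to shift can rescue that reading.

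The resolution is that the operative notion of agreement is not pointwise equality of vertex colors but equality of the color \emph{differences} across the edges of $S$. Under the bijection of \autoref{sec:gridgraph}, an edge of $G$ encodes only the mountain/valley value of its dual crease, via $\col(v_{i+1})-\col(v_i)\bmod 3$; accordingly a coloring ``respects'' $S$ exactly when it induces the same difference across each edge of $S$, which is also how agreement is used in \autoref{lem:fs2}. Adding a constant to every color on one side of $\C$ preserves the difference across every edge whose two endpoints lie on the same side, i.e., across every edge not crossed by $\C$, so $\col'$ respects $S$ automatically; the only role of choosing the side away from the upper-left square is to keep that vertex colored $0$ so that $\col'$ remains a legitimate coloring under the bijection. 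With that reading your argument closes, and your remark about paths versus cycles is fine, since in either case $\C$ separates the squares into two regions.
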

If no edge in a forcing set $F$ cuts across a uniform curve $\C$, then we can modify the coloring by adding $+1\pmod 3$ or $-1\pmod 3$ to all colors on one side of $\C$ and get a different valid coloring  (by \autoref{prop:uniform}), contradicting the fact that $F$ is forcing.

\subsection{Minimum Forcing Set Algorithm}
\label{sec:alg}
Before describing an algorithm for finding a minimum forcing set of a 3-colored grid graph $G$, 
we provide some intuition behind our approach to force a particular coloring on $G$. 

\begin{figure}[htbp]
\centering
\includegraphics[width=\linewidth]{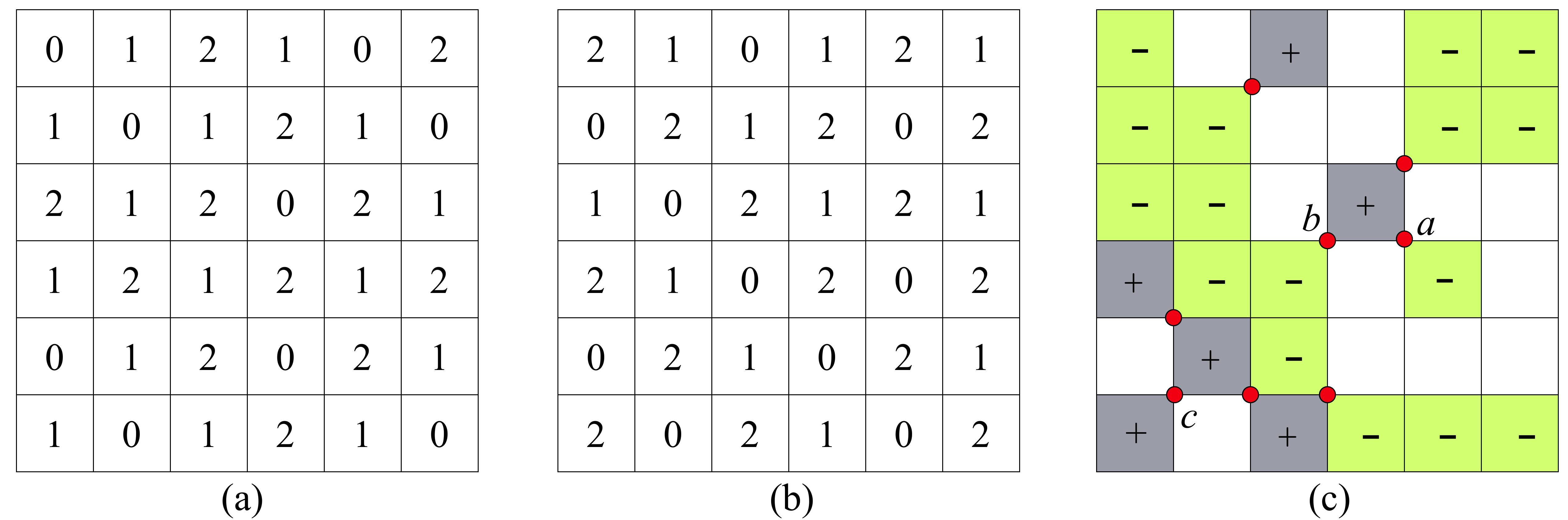}
\caption{3-colored grid graph $G$ (a,b) Colorings $\col_1$ and $\col_2$ (c) Difference graph $G(\col_1, \col_2)$.}
\label{fig:nonStandardUB-colors}
\end{figure}

Let $\col_1$ and $\col_2$ be two arbitrary $3$-colorings of $G$ (ignoring for now forced edges). We encapsulate 
the difference between $\col_1$ and $\col_2$ in a graph $G(\col_1, \col_2)$ with the same structure as $G$, which we refer to as the 
\emph{difference} graph.
This graph is obtained by partitioning $G$ 
into maximal polyominoes of three types: 
$(0)$-polyominoes include squares $s$ such that $\col_2(s) = \col_1(s)$; 
$(+1)$-polyominoes include squares $s$ such that $\col_2(s) = \col_1(s)+1$; and 
$(-1)$-polyominoes include squares $s$ such that $\col_2(s) = \col_1(s)-1$. 
A $t$-\emph{square}, with $t \in \{(0), (+1), (-1)\}$, is a square in the difference graph that belongs to a $t$-polyomino. 
\figref{nonStandardUB-colors}c shows the difference graph corresponding to the two colorings from Figures~\ref{fig:nonStandardUB-colors}a and~\ref{fig:nonStandardUB-colors}b. We view the square types $0$, $+1$ and $-1$ as colors and seek to partition the polyomino boundaries into a collection of uniform curves, meaning that the color difference across each arc of such a curve is uniform. 
These uniform curves will guide us into selecting a set of edges that forces $\col_1$ on $G$. 

In order to determine a partition of polyomino boundaries into uniform curves, we need to examine the way these boundaries intersect. 
Because a polyomino is composed of adjacent squares, the boundaries of two different polyominoes may not cross; they may, however, share square corners. Two adjacent polyominoes clearly share square corners, but it is also possible for three or four polyominoes to share a square corner (see square corners labeled $a$ and $b$ in \figref{nonStandardUB-colors}c, incident on three and four polyominoes, respectively).  Call a square corner where  three or four polyominoes meet a \emph{hub}; these are marked by thick red dots in \figref{nonStandardUB-colors}c. 

\begin{observation}
\label{obs:smooth}
At any hub $p$, two diagonally opposite squares incident on $p$ must be of the same type and the other two squares incident on $p$ may be of either the same or different types. 
\end{observation}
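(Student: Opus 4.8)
The plan is to argue directly from the local structure of a 3-coloring around a single square corner $p$. Label the four squares incident on $p$ as $s_1, s_2, s_3, s_4$ in cyclic order, so that $s_1, s_3$ are one diagonally opposite pair and $s_2, s_4$ are the other. Recall that the difference-graph "type" of a square $s$ records $\col_2(s) - \col_1(s) \pmod 3 \in \{0, +1, -1\}$; equivalently, thinking of types as a $\mathbb{Z}/3$-valued coloring, the type of $s$ is a function of the pair $(\col_1(s), \col_2(s))$. The key observation I would exploit is that $s_1, s_2, s_3, s_4$ form exactly a single $2\times 2$ block of the grid, so both $\col_1$ and $\col_2$ restricted to these four squares are colorings of a $2\times 2$ grid graph, and the crux is understanding what $2\times 2$ color patterns look like.

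**The main geometric/combinatorial step.**

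First I would establish a $2\times 2$ lemma: in any valid 3-coloring $\col$ of a $2\times 2$ grid graph $C_4$, the two diagonally opposite pairs of squares cannot \emph{both} be bichromatic with the colors "rotated" — more precisely, if we go around the 4-cycle $s_1 s_2 s_3 s_4$, the sum of the four signed differences $\bigl(\col(s_2)-\col(s_1)\bigr) + \bigl(\col(s_3)-\col(s_2)\bigr) + \bigl(\col(s_4)-\col(s_3)\bigr) + \bigl(\col(s_1)-\col(s_4)\bigr)$ telescopes to $0$, and each term lies in $\{+1,-1\} \pmod 3$ (never $0$, since adjacent squares get distinct colors). Four terms from $\{+1,-1\}$ summing to $0 \bmod 3$ forces exactly two $+1$'s and two $-1$'s. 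I would then translate this back: around the cycle, the color on $s_i$ moves "up, up, down, down" in some cyclic rotation, which precisely means one diagonal pair is monochromatic (say $\col(s_1)=\col(s_3)$, when the two $+1$ steps are adjacent on the cycle and the two $-1$ steps are adjacent) — and this holds for $\col_1$ as well as for $\col_2$.

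**Combining the two colorings.**

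Now apply this to both $\col_1$ and $\col_2$ around $p$. The hypothesis that $p$ is a hub means at least three distinct types appear among $s_1,\dots,s_4$, so not all four types are equal. If the two $\col_1$-monochromatic squares were the \emph{opposite} diagonal from the two $\col_2$-monochromatic squares — say $\col_1(s_1)=\col_1(s_3)$ but $\col_2(s_2)=\col_2(s_4)$ with $\col_2(s_1)\neq\col_2(s_3)$ — I would derive a contradiction by counting types: one checks that this configuration forces all four squares to have distinct types, which is incompatible with only three values $\{0,+1,-1\}$ being available (pigeonhole). Hence $\col_1$ and $\col_2$ must be monochromatic on the \emph{same} diagonal pair, say $s_1, s_3$; since the type of a square is determined by $(\col_1, \col_2)$ and both agree on $s_1$ and $s_3$, we get $\mathrm{type}(s_1) = \mathrm{type}(s_3)$, which is exactly the claim. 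The remaining pair $s_2, s_4$ is unconstrained beyond validity, giving the "same or different" clause.

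**Anticipated obstacle.**

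The routine parity computation is easy; the step I expect to require the most care is the type-counting argument that rules out the "crossed diagonals" case — I need to verify that when $\col_1$ is monochromatic on $\{s_1,s_3\}$ and $\col_2$ is monochromatic on $\{s_2,s_4\}$ (the opposite diagonal), the four ordered pairs $(\col_1(s_i),\col_2(s_i))$ really are forced to yield four pairwise-distinct elements of $\mathbb{Z}/3$, which is where the contradiction with "only three types" bites. This amounts to a short finite case check over the possible color values, which I would carry out explicitly but briefly rather than leaving to the reader.
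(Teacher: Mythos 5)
Your overall strategy---reduce to the structure of 3-colorings on the single $2\times 2$ block around $p$, show each of $\col_1,\col_2$ is monochromatic on a diagonal, then argue the two monochromatic diagonals must coincide---is genuinely different from the paper's, which assumes both diagonal pairs have mixed types and chases colors of the resulting adjacent type classes to a direct contradiction. Your $2\times 2$ lemma is correct: the four signed steps around the cycle lie in $\{\pm 1\}$ and telescope to $0 \bmod 3$, so two are $+1$ and two are $-1$, and at least one diagonal is monochromatic. The gap is in the step you yourself flag as delicate. In the ``crossed'' case---say $\col_1(s_1)=\col_1(s_3)=a$ with $\col_1(s_2),\col_1(s_4)$ equal to $a+1,a+2$ in some order, and $\col_2(s_2)=\col_2(s_4)=b$ with $\col_2(s_1),\col_2(s_3)$ equal to $b+1,b+2$ in some order---the four types are \emph{not} pairwise distinct. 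Computing $\col_2-\col_1$ gives $\{\mathrm{type}(s_1),\mathrm{type}(s_3)\}=\{b-a+1,\,b-a+2\}$ and also $\{\mathrm{type}(s_2),\mathrm{type}(s_4)\}=\{b-a-1,\,b-a-2\}=\{b-a+2,\,b-a+1\}\pmod 3$: only two types occur, each twice. (Concretely, with $a=b=0$, $\col_2(s_1)=1$, $\col_2(s_3)=2$, the types read $+1,-1,-1,+1$ around the cycle.) So the pigeonhole contradiction you rely on never materializes, and the proof as written does not close.

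The crossed case can still be ruled out, but by a different contradiction: in every sub-case the two equal-type pairs turn out to be \emph{adjacent} pairs, so at most two polyominoes meet at $p$, contradicting the hypothesis that $p$ is a hub. This is also where your reading of the hub hypothesis needs repair: a hub does \emph{not} force three distinct types among the four squares (three or four polyominoes can meet at $p$ with only two types present, when equal-type squares sit diagonally and lie in different polyominoes---the paper explicitly handles this two-type hub case when smoothing). What the hub hypothesis does give you is precisely that the four squares cannot split into two adjacent equal-type pairs, and that is the fact you need to finish the crossed case. With that substitution your argument becomes a valid, if somewhat longer, alternative to the paper's proof.
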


\begin{figure}[htbp]
\centering
\includegraphics[width=\linewidth]{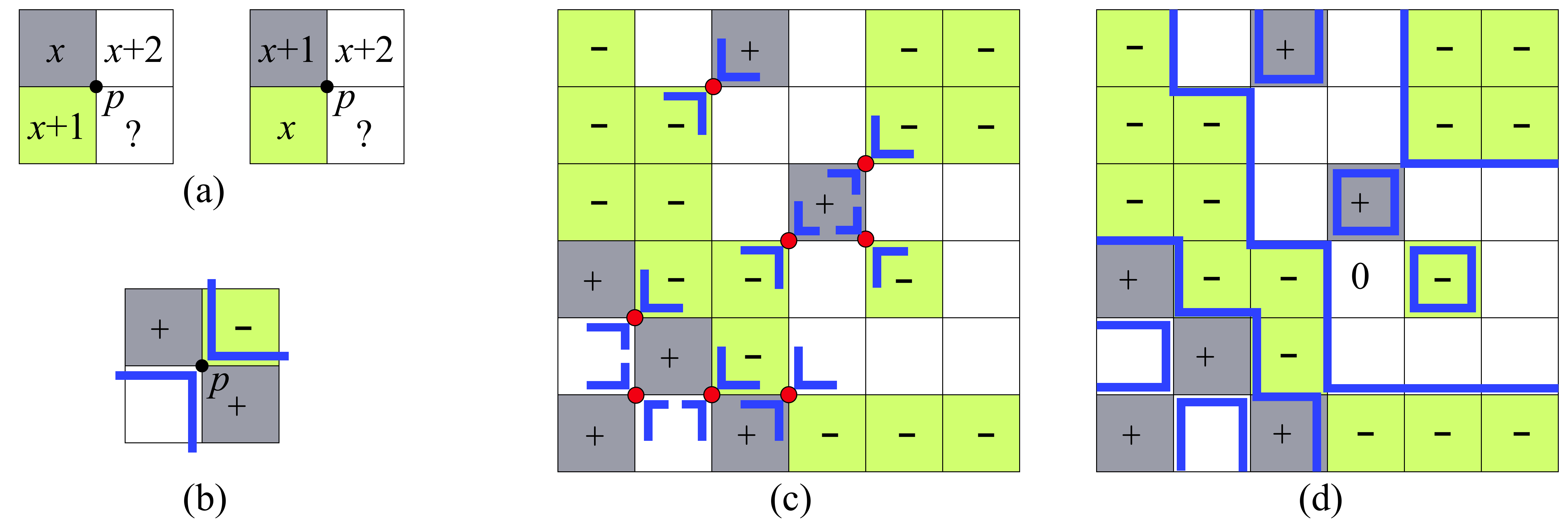}
\caption{(a) \autoref{obs:smooth}: two diagonally opposite squares must be of the same type  (b) Smoothing out $p$ (c) \figref{nonStandardUB-colors}c with all hubs (marked by thick dots) smoothed out 
(d) Uniform curves.}
\label{fig:nonStandardUB-paths}
\end{figure}

\noindent
To verify this observation, assume to the contrary that both pairs of diagonally opposite squares are of different types. Then two squares of the same type incident on $p$ are adjacent. For simplicity, assume that these are $0$-squares, so they have matching colors in $\col_1$ and $\col_2$ (the argument is symmetric for a different choice of square type).  
Then the other two squares incident on $p$ are also adjacent and therefore must be of different types (otherwise we would have only two polyominoes incident on $p$). 
Let $x$ and $x+1$ be the colors in $\col_1$ and $\col_2$ corresponding to the $(+1)$-square incident on $p$. Refer to \figref{nonStandardUB-paths}a. Then the colors in $\col_1$ and $\col_2$ corresponding to the $(-1)$-square must be $x+1$ and $x$ respectively, because any other pair of colors would result in adjacent identical colors, contradicting the fact that $\col_1$ and $\col_2$ are both valid colorings. This leaves $x+2$ as the only color available for the two adjacent $0$-squares incident on $p$, again in contradiction with the fact that  $\col_1$ and $\col_2$ are valid colorings.

\autoref{obs:smooth} is crucial in deciding how a uniform curve turns at a hub. It tells us that, if squares of three different types meet at a hub $p$, then two diagonally opposite squares are of different types, and the other two diagonally opposite squares are of the same type. 
In this case we smooth $p$ to two right angles corresponding to the squares of different types incident on $p$ (see \figref{nonStandardUB-paths}b). This leaves the two squares of the same type to the same side of each right angle, as required along each segment of a uniform curve. If the two squares of the same type incident to $p$ are in different polyominoes before smoothing $p$, then we view the two polyominoes as merging into a single one after smoothing $p$. 
If squares of only two types meet at a hub $p$, then any pair of diagonally opposite right angles can be used to smooth out $p$, because the other two squares are guaranteed to be of the same type. \figref{nonStandardUB-paths}c shows the result of smoothing all the hubs in the difference graph from \figref{nonStandardUB-colors}c.

Once all hubs have been eliminated by this smoothing process, the union of all polyomino boundaries contains only pairwise disjoint components. These components are precisely the uniform curves in the graph formed from the difference between $\col_1$ and $\col_2$. (See \figref{nonStandardUB-paths}d.) 

\begin{observation}
\label{obs:o1}
For each uniform curve $\C$ in $G(\col_1, \col_2)$, the difference$\pmod 3$ between two colors across an arc of $\C$ is different in $\col_1$ and $\col_2$. 
\end{observation}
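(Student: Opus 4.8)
The plan is to unwind the definitions of the difference graph $G(\col_1,\col_2)$ and of a uniform curve, after which the statement reduces to a one-line computation mod $3$.

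First I would fix a uniform curve $\C$ of $G(\col_1,\col_2)$ and an arbitrary arc $e$ of $\C$, and let $a$ and $b$ be the two squares sharing $e$, with $a$ and $b$ lying on opposite sides of $\C$. Since $\C$ is one of the pairwise-disjoint components obtained from the polyomino boundaries of the difference graph after smoothing all hubs, the arc $e$ lies on the boundary separating the polyomino containing $a$ from the one containing $b$, and these are two different polyominoes. Because each polyomino is a maximal connected block of squares of one and the same type, and because the smoothing step only ever merges polyominoes of equal type (and never deletes or creates an arc), $a$ and $b$ necessarily have different difference-types. Write $t_a, t_b\in\{-1,0,+1\}$ for these types, so $t_a\neq t_b$.

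Next I would invoke the defining relation of the difference graph: for every square $s$ one has $\col_2(s)\equiv\col_1(s)+t_s\pmod 3$. Hence
\[\col_2(b)-\col_2(a)\equiv\bigl(\col_1(b)-\col_1(a)\bigr)+(t_b-t_a)\pmod 3 .\]
Since $t_a$ and $t_b$ are distinct elements of $\{-1,0,+1\}$, the difference $t_b-t_a$ equals one of $\pm1,\pm2$, none of which is $\equiv 0\pmod 3$. Therefore $\col_2(b)-\col_2(a)\not\equiv\col_1(b)-\col_1(a)\pmod 3$, which is exactly the assertion; as $e$ was an arbitrary arc of $\C$, it holds along the whole curve.

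The only point needing care --- and the place where I would be most careful in writing out the full proof --- is the claim that across every arc of a uniform curve of $G(\col_1,\col_2)$ the two incident squares genuinely have different difference-types, i.e.\ that the hub-smoothing and polyomino-merging process never buries an arc of such a curve in the interior of a single (merged) polyomino. This is immediate once one observes that merging identifies only same-type polyominoes and that smoothing reroutes corners without adding or removing arcs, so every arc of $\C$ still separates two of the original polyominoes, whose types differ by maximality. Beyond this, the argument is a routine modular computation and I anticipate no further obstacle.
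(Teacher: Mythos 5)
Your proof is correct and follows essentially the same route as the paper's: both arguments rest on the fact that the two squares across an arc of a uniform curve in the difference graph lie in polyominoes of different types, and then compute how that type difference shifts the color difference mod~$3$. The only cosmetic distinction is that you collapse the paper's explicit three-case analysis (types $(0,+1)$, $(0,-1)$, $(+1,-1)$) into the single observation that $t_b-t_a\in\{\pm1,\pm2\}$ is never $0\pmod 3$, and you additionally make explicit the (correct) point that hub-smoothing never buries an arc of a uniform curve inside a single polyomino.
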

To see this, pick an arbitrary uniform curve $\C$ and let $P_1$ and $P_2$ be the two polyominoes adjacent to either side of $\C$ in the graph formed by the difference between $\col_1$ and $\col_2$. Pick an arbitrary arc $e$ of $\C$, and let $v_1 \in P_1$ and $v_2 \in P_2$ be the two adjacent vertices on either side of $e$. We now consider all possible combinations of polyomino types. If $P_1$ is a $(0)$-polyomino, then $P_2$ is either a $(+1)$ or a $(-1)$-polyomino. In the first case we have $\col_2(v_2) = \col_1(v_2) + 1$, and because $\col_2(v_1) = \col_1(v_1)$, we get $\col_2(v_2) - \col_2(v_1) = \col_1(v_2) - \col_1(v_1) + 1$. In the latter case, the argument is symmetric with $+1$ replaced by $-1$. 
Finally, if $P_1$ is a $(+1)$-polyomino and $P_2$ is a $(-1)$-polyomino, we have $\col_2(v_1) = \col_1(v_1) + 1$, $\col_2(v_2) = \col_1(v_2) -1$, which together yield $\col_2(v_2) - \col_2(v_1) = \col_1(v_2) - \col_1(v_1) - 2$. This settles \autoref{obs:o1}. 

\begin{lemma}
\label{lem:fs2}
Let $F$ be a set that includes, for every uniform curve $\C$ in $G(\col_1, \col_2)$, an edge across $\C$ with its end vertices colored according to $\col_1$. Then $\col_2$ does not agree with the coloring imposed by $F$. 
\end{lemma}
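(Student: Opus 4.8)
The plan is a short argument by contradiction resting almost entirely on \autoref{obs:o1}. Suppose, for contradiction, that $\col_2$ \emph{does} agree with the coloring imposed by $F$. Since every edge of $F$ carries the labels prescribed by $\col_1$, this assumption says exactly that $\col_1$ and $\col_2$ assign the same color to every endpoint of every edge of $F$. The statement is non-trivial precisely when $G(\col_1,\col_2)$ contains at least one uniform curve; this holds whenever $\col_1$ and $\col_2$ differ but neither is obtained from the other by adding a fixed constant to the color of every square. (In the grid-graph correspondence that constant-shift situation is ruled out by the pre-colored vertex; otherwise one reads the lemma with ``$\col_1$ and $\col_2$ are not a constant shift apart'' as a standing hypothesis.) Indeed, under this hypothesis there are squares of at least two of the three types in $G(\col_1,\col_2)$, hence at least two polyominoes and an interior boundary arc between them; after the hub-smoothing process described above the union of all polyomino boundaries decomposes into at least one uniform curve.

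I would then fix such a uniform curve $\C$ and invoke the hypothesis on $F$: it contains an edge $e=(v_1,v_2)$ that crosses an arc $a$ of $\C$. Because $\C$ (after smoothing) is a union of boundary arcs between polyominoes of $G(\col_1,\col_2)$, and because each edge crosses exactly one arc, the endpoints $v_1$ and $v_2$ lie in the two distinct polyominoes $P_1$ and $P_2$ flanking $a$, on opposite sides of $\C$. Applying the contradiction hypothesis to $v_1$ and $v_2$ gives $\col_2(v_1)=\col_1(v_1)$ and $\col_2(v_2)=\col_1(v_2)$, so subtracting yields
\[
\col_2(v_2)-\col_2(v_1)\equiv \col_1(v_2)-\col_1(v_1)\pmod 3 .
\]
But \autoref{obs:o1} asserts that across the arc $a$ of $\C$ the color difference (mod 3) is \emph{different} in $\col_1$ and in $\col_2$ --- a direct contradiction. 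Hence no such $\col_2$ exists, i.e.\ $\col_2$ cannot agree with the coloring imposed by $F$.

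I do not expect a real obstacle: the engine, \autoref{obs:o1}, is already in hand. The one spot that needs a careful sentence is the identification of ``$e$ is an edge across $\C$'' with ``the endpoints of $e$ lie in distinct polyominoes of $G(\col_1,\col_2)$ on opposite sides of an arc of $\C$'', since only then does \autoref{obs:o1} apply verbatim; this follows because each edge crosses exactly one arc and $\C$, after smoothing, is assembled from polyomino boundary arcs. The only other point of care is the degenerate case flagged above (no uniform curves at all, so $F$ may be empty), which must either be excluded by a pre-colored vertex or stated as an explicit hypothesis.
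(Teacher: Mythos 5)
Your proof is correct and follows essentially the same route as the paper's: both take the edge of $F$ crossing the uniform curve and observe that the color difference it imposes (that of $\col_1$) contradicts \autoref{obs:o1}, which guarantees the difference across that arc changes between $\col_1$ and $\col_2$. The extra remarks about the degenerate no-curve case and the identification of ``edge across $\C$'' with ``endpoints in the two flanking polyominoes'' are harmless elaborations the paper leaves implicit.
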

\begin{proof}
Let $\C$ be an arbitrary uniform curve in the difference graph $G(\col_1, \col_2)$.  By the lemma statement, $F$ includes an edge $(x, y)$ across $\C$ such that the difference between the colors of $x$ and $y$ equals $\col_1(y) - \col_1(x)$. 
By \autoref{obs:o1}, $\col_2(y) - \col_2(x) \neq \col_1(y) - \col_1(x)$, therefore $\col_2$ does not respect $F$. 
\end{proof}

\paragraph{The Algorithm. }
Next we describe an $O(m^2n^2)$ time algorithm that determines a minimum forcing set of a given $3$-colored $m \times n$ grid graph $G$. Let $\col$ denote the coloring function for $G$. 

\begin{figure}[htbp]
\centering
\includegraphics[width=0.5\linewidth]{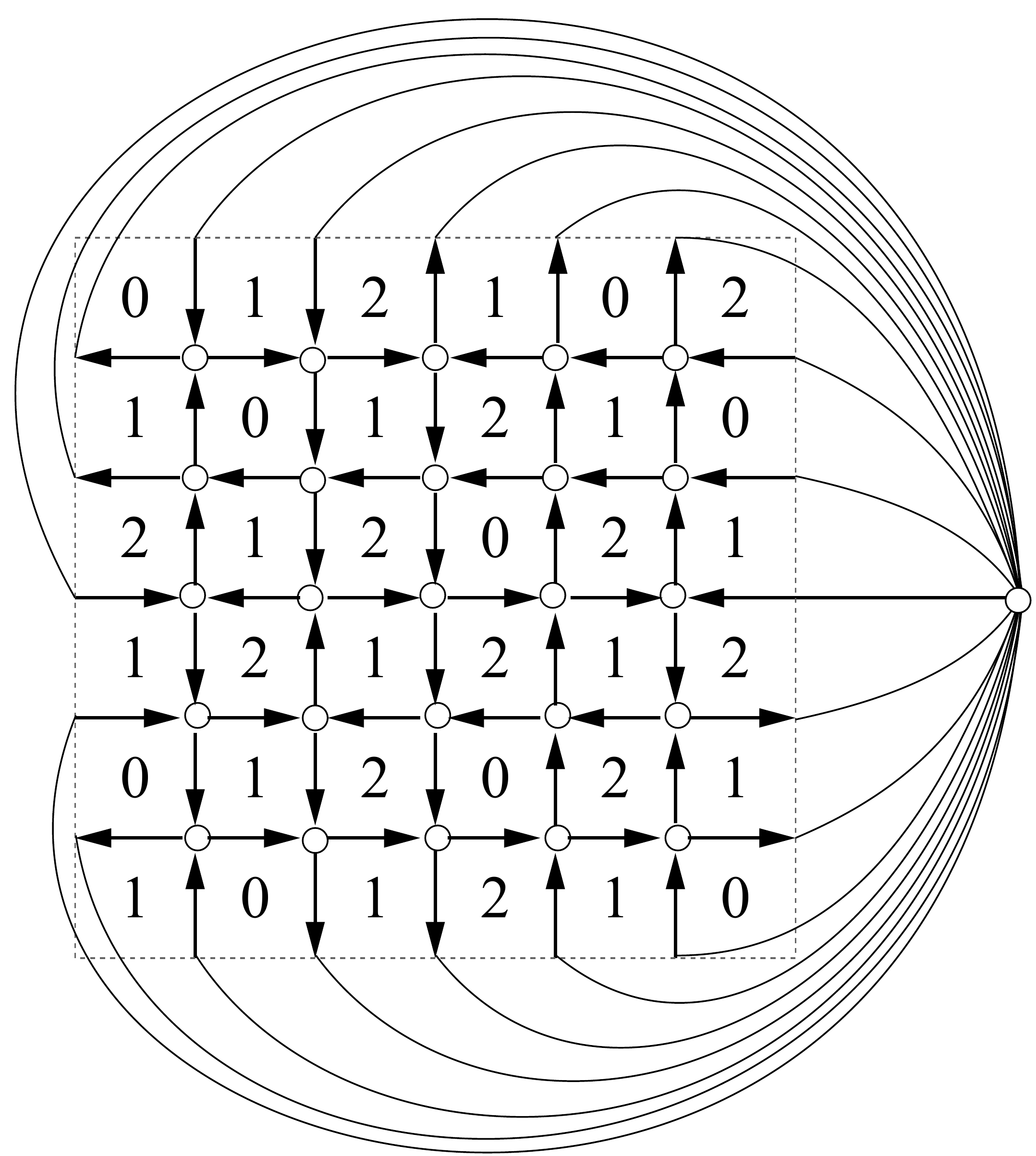}
\caption{Planar directed graph $H$ encapsulating all uniform curves.}
\label{fig:nonStandardUB-graph}
\end{figure}

The algorithm consists of two main steps. In a first step we derive from $G$ a planar directed graph $H$ that encapsulates all uniform curves of $G$. In a second step we select from $H$ a minimum set of edges whose removal leaves $H$ acyclic, and show that this is a minimum forcing set for $G$. \autoref{alg:mfs} outlines the key steps of this algorithm. 

\vspace{1mm}
\SetKwBlock{Stageone}{{\bf 1. Compute the directed graph $H$:}}{}%
\SetKwBlock{Stagetwo}{{\bf 2. Compute a forcing set $F$:}}{}%
\begin{algorithm}
\centerline{{\sc MinimumForcingSet}(Grid graph $G$, Coloring $\col$)}
{\hrule width 0.92\linewidth}\vspace{0.8em}
\Stageone{ 
Set $V(H) \leftarrow \{\mbox{all interior square corners}\}$. \\
Add the outer node to $V(H)$.\\
Set $E(H) \leftarrow \{\mbox{all interior arcs}\}$. \\
(* Arcs incident to the outer boundary extend to meet the outer node.*) \\
\ForEach {arc $e \in E(H)$} {
Let $s_i$ and $s_j$ be the two squares adjacent along $e$. \\
\lIf{$\col(s_j) = \col(s_i)+1$} \\ 
\Indp {orient $e$ clockwise around $s_i$}  \\
\Indm \lElse{orient $e$ clockwise around $s_j$}. 
}
}
\Stagetwo{
Initialize $F \leftarrow \emptyset$. 
Find a minimum feedback arc set $J$ in $H$~\cite{L76Thesis, Ram88}. \\ 
\ForEach {arc $e \in J$} {
Add to $F$ the edge crossing $e$. 
}
\Return $F$.
}
\noindent{\hrule width 0.92\linewidth} \vspace{1mm} 
\caption{Minimum Forcing Set algorithm.}
\label{alg:mfs}
\end{algorithm}

We now turn to describing the first step in detail. The graph $H$ has one node for each interior square corner of $G$, and one node for the whole outer boundary of $G$. Refer to \figref{nonStandardUB-graph}. The edges of $H$ are the interior 
arcs, with those incident to the outer boundary extended to meet the representative node.

For every pair of squares $s_i$ and $s_j$ adjacent along an arc $e$, we orient $e$ clockwise around $s_i$ if $\col(s_j) = \col(s_i)+1\pmod{3}$, and counterclockwise otherwise. (Note that swapping the roles of $s_i$ and $s_j$ does not change this rule for directing arcs.) The construction of $H$ is linear in the size of $G$, which is $O(mn)$. 

\begin{observation}
\label{obs:uniform}
Every directed cycle of $H$ is a uniform curve. 
\end{observation}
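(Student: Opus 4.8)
The plan is to distill the orientation rule of $H$ into a single local invariant and then carry it unchanged around a directed cycle. First I would restate the rule geometrically: for an arc $e$ shared by squares $s_i$ and $s_j$, orienting $e$ ``clockwise around $s_i$'' means precisely that, walking along $e$ in its chosen direction, $s_i$ lies on one's right and $s_j$ on one's left, since the clockwise convention puts the enclosed square on the right. Combined with the rule ``orient clockwise around $s_i$ iff $\col(s_j)=\col(s_i)+1\pmod 3$'' (equivalently: orient clockwise around whichever of the two squares has the smaller color), this yields the one fact that drives the whole argument: along every directed arc of $H$, traversed in its direction, the square on the left has color exactly one more $\pmod 3$ than the square on the right. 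I would isolate this as a small claim and confirm it with a single picture; it is a two-line case check but it is the crux, and it is all the local structure the proof needs.

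Second, I would feed this into an arbitrary directed cycle $C=e_1e_2\cdots e_k$ of $H$. Let $a_t$ and $b_t$ be the squares lying on the right and on the left of $e_t$ as $C$ traverses it. By the claim, $\col(b_t)-\col(a_t)=+1\pmod 3$ for every $t$, so the color difference across every arc of $C$ is the same and is measured with respect to one fixed side of $C$, namely its left side; that is exactly the defining property of a uniform curve. When $C$ avoids the outer node it is a closed curve of interior arcs (the ``cycle'' case of the definition); when $C$ uses the outer node it decomposes into one or more boundary-to-boundary paths, and the identical computation certifies each of them (the ``path'' case). I would also record, for later use, that the two colors across any arc of $C$ are distinct ($+1\ne 0\pmod 3$), so $C$ genuinely witnesses an alternate coloring via \autoref{prop:uniform}, which is what makes \autoref{cor:uniform} applicable.

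The step that needs the most care is making ``the same side of $C$'' airtight when $C$ is not a simple cycle: an interior corner of $H$ can have in-degree and out-degree both equal to $2$, so a directed cycle may visit such a corner twice and its underlying curve need not be embedded. There are two ways to settle this, and I would mention both. Reading ``side'' locally — ``$a_s$ and $a_t$ lie on the same side of $C$'' meaning ``$a_s$ is to the right of $e_s$ in $C$'s direction iff $a_t$ is to the right of $e_t$'' — makes the argument above go through with no appeal to planarity at all, since the local invariant already pins the color difference relative to the right-hand side of every arc. Alternatively, and this is all the paper actually requires, one restricts to \emph{minimal} directed cycles: \autoref{cor:uniform} invokes this observation only to lower-bound forcing sets, and a set of arcs meeting every minimal directed cycle meets every directed cycle, so it suffices to treat minimal cycles; a minimal cycle has no repeated vertex, hence is a genuine simple closed curve (or simple boundary-to-boundary path), and then ``left of $C$'' is a bona fide side of $C$ in the Jordan sense and the two readings coincide. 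I would lean on the second route, since it is the shortest and is exactly how the observation is used downstream.
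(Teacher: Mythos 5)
Your proof is correct and follows essentially the same route as the paper's one-sentence argument: consistent arc orientation around a directed cycle forces a consistent color difference ($+1$ toward the left of the traversal direction), which is exactly the definition of a uniform curve. Your additional care about non-simple cycles and cycles through the outer node fills in details the paper leaves implicit, but does not change the approach.
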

This observation follows immediately from the fact that all arcs have the same orientation along a directed cycle $\C$ in $H$. This means that the difference (mod 3) between the two colors across every arc of $\C$ must be the same (either $+1$ across all arcs, or $-1$ across all arcs), otherwise two arcs along $\C$ would have opposite orientations. 

\begin{lemma}
\label{lem:curves}
For any coloring $\col' \neq \col$, the set of directed cycles in $H$ includes all uniform curves in the difference graph $G(\col, \col')$.
\end{lemma}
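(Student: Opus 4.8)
The goal is to show every uniform curve $\C$ of the difference graph $G(\col,\col')$ appears as a directed cycle (or boundary-to-boundary directed path, completed through the outer node) in $H$. The natural strategy is to fix such a curve $\C$ and check, arc by arc, that the orientation $H$ assigns to each arc of $\C$ is consistent — that is, all arcs of $\C$ are oriented the same way around $\C$, so that $\C$ really does close up into a directed cycle in $H$. The link between the two pictures is \autoref{obs:o1}: across any arc $e$ of $\C$ with neighbors $v_1\in P_1$, $v_2\in P_2$, the difference $\col'(v_2)-\col'(v_1)$ differs (mod $3$) from $\col(v_2)-\col(v_1)$, and since $\C$ is uniform in $G(\col,\col')$, the ``side'' of $\C$ on which $\col'$ exceeds $\col$ by a fixed amount is constant along $\C$.

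First I would set up the dictionary: a uniform curve of $G(\col,\col')$ separates, locally, a $t_1$-polyomino from a $t_2$-polyomino with $t_1\neq t_2$, and by the case analysis in \autoref{obs:o1} the sign of $\col(v_2)-\col(v_1)$ (taking $v_1$ always on the, say, $P_1$-side) is the same across every arc — it is $+1$ exactly when moving from the lower $\col'$-offset side to the higher one causes $\col$ to increase, and this is forced to be uniform because $t_2-t_1\pmod 3$ is constant along $\C$. Concretely: if along $\C$ one side is consistently a $(0)$-polyomino region and the other a $(+1)$-region, then $\col(v_2)-\col(v_1)$ has a fixed value mod $3$ on every arc; likewise for the $(+1)/(-1)$ and $(0)/(-1)$ cases. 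So $\col$ itself is ``uniform'' across $\C$ in the original sense of a uniform curve of $G$ relative to $\col$. Then I would invoke the orientation rule: $H$ orients an arc $e$ between $s_i,s_j$ clockwise around $s_i$ iff $\col(s_j)=\col(s_i)+1\pmod 3$. Since $\col(s_j)-\col(s_i)$ has the same value mod $3$ for the square on a fixed side of $\C$ across every arc, every arc of $\C$ gets oriented "clockwise around the same side", i.e., consistently traversing $\C$ in one direction. Hence $\C$ is a directed closed walk in $H$; being a simple curve (the hubs have been smoothed, so $\C$ is a simple cycle or a simple boundary-to-boundary path), it is a directed cycle of $H$, or — if $\C$ reaches the outer boundary — a directed cycle once we route it through the outer node, which is exactly how $H$ was built (boundary-incident arcs extended to the outer node).

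I would also spell out the one subtlety about direction around $\C$ itself: "clockwise around $s_i$" for a fixed side of $\C$ means the arc points in a fixed circulation sense along $\C$, and this needs a small orientation bookkeeping — the squares on one side of $\C$ are all "interior" to the region bounded (locally) on that side, so "clockwise around that side's square" is a coherent global choice of traversal direction. This is the step I expect to be the main obstacle: it is conceptually easy but requires care to state precisely, because $\C$ can bend, and one must be sure that "the side where $\col$ is larger by $1$" being fixed really does translate into a globally coherent cyclic orientation rather than something that flips at a turn. The smoothing of hubs (\autoref{obs:smooth} and the surrounding discussion) is what guarantees $\C$ is an embedded simple curve so that "left side / right side of $\C$" is well-defined along its whole length, which is what makes the bookkeeping go through. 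Once that is pinned down, the rest is the direct translation above, and the lemma follows.
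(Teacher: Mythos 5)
Your proposal is correct and follows essentially the same route as the paper's (much terser) proof: uniformity of the type difference along $\C$ forces a uniform $\col$-difference across its arcs, so every arc of $\C$ is oriented consistently by the rule defining $H$ and $\C$ closes up into a directed cycle (through the outer node if it is a boundary-to-boundary path). The one assertion you flag but do not fully justify---that a constant type offset $t_2-t_1\pmod 3$ along $\C$ pins down $\col(s_2)-\col(s_1)$---follows in one line because both $\col$ and $\col'$ are proper colorings, so both differences lie in $\{1,2\}\pmod 3$ and only one value of $\col(s_2)-\col(s_1)$ keeps $\col'(s_2)-\col'(s_1)$ in that set.
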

\begin{proof}
Pick an arbitrary uniform curve $\C$ in $G(\col, \col')$. By definition, the color difference$\pmod{3}$ is the same across every arc of $\C$. This implies that all arcs of $\C$ are oriented the same way in $G(\col, \col')$ and therefore $\C$ is a directed cycle in $H$. 
\end{proof}

\begin{figure}[htbp]
\centering
\includegraphics[width=\linewidth]{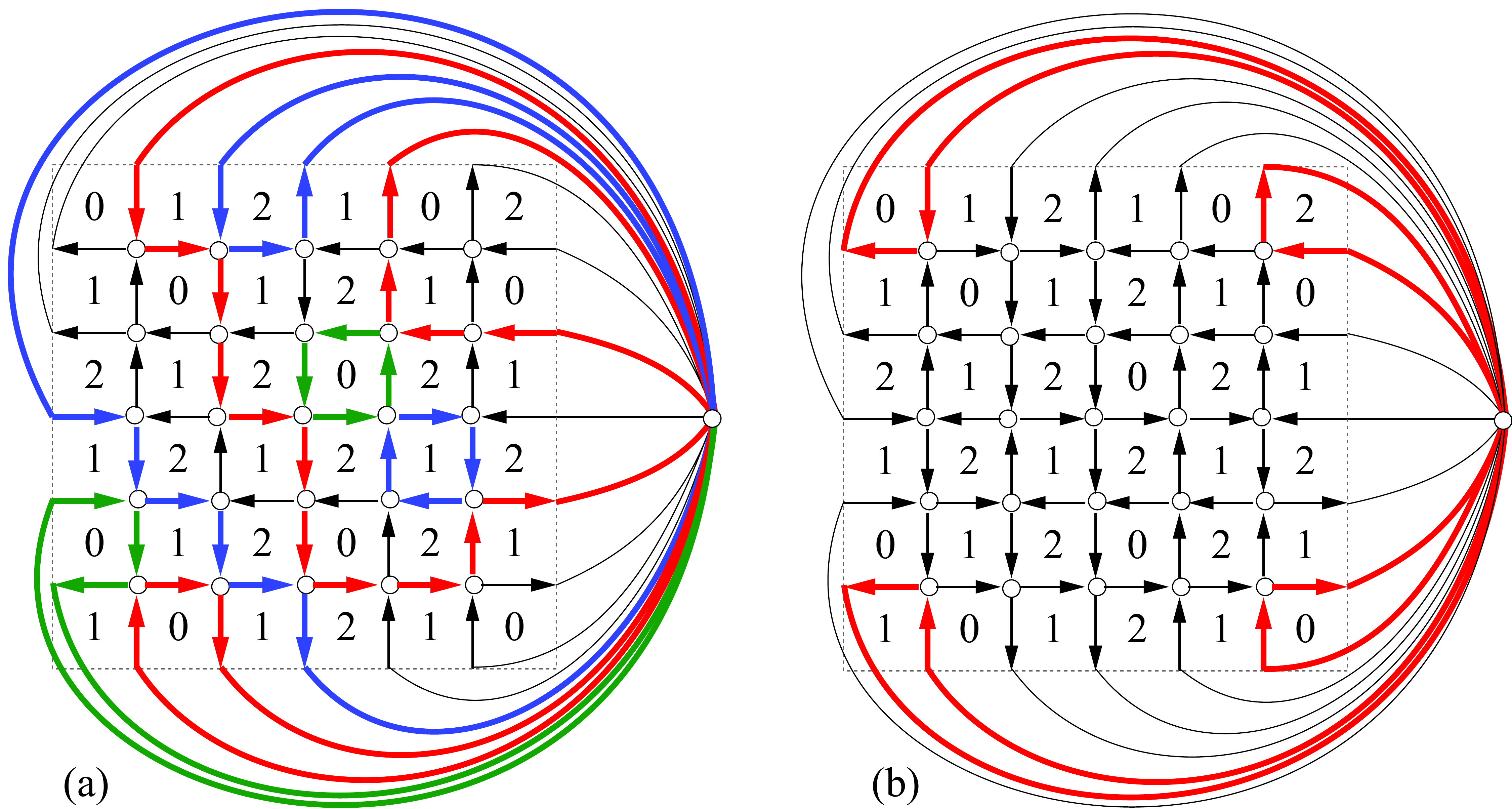}
\caption{Uniform curves in the difference graph from~ \figref{nonStandardUB-paths}c.} 
\label{fig:nonStandardUB-graph2}
\end{figure}
\noindent
\figref{nonStandardUB-graph2}a outlines the uniform curves in the difference graph from \figref{nonStandardUB-paths}c. 
\figref{nonStandardUB-graph2}b outlines four other uniform curves that potentially correspond to different valid colorings of $G$. 
\begin{theorem}
\label{thm:main}
Let $F$ be a set of edges. Coloring $\col$ is forced by $F$ if and only if $F$ includes an edge across every directed cycle of $H$.
\end{theorem}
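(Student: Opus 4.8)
The plan is to prove the two implications separately, reusing the machinery built up in this section. For the forward direction, suppose $\col$ is forced by $F$ and let $\C$ be an arbitrary directed cycle of $H$ (possibly one that passes through the outer node, in which case $\C$ corresponds to a boundary-to-boundary path in $G$ rather than an interior cycle). By \autoref{obs:uniform}, $\C$ is a uniform curve of $G$ with respect to $\col$, so \autoref{cor:uniform} forces the forcing set $F$ to contain an edge crossing $\C$. Since $\C$ was arbitrary, $F$ hits every directed cycle of $H$.

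For the backward direction, assume $F$ contains an edge across every directed cycle of $H$ but, for contradiction, that $\col$ is \emph{not} forced by $F$: there is a valid coloring $\col' \neq \col$ agreeing with $\col$ on every vertex incident to an edge of $F$. Recalling that one vertex of $G$ is pre-colored, $\col'$ cannot be a global recoloring of $\col$, so the difference graph $G(\col,\col')$ contains both a $(0)$-square (the one holding the pre-colored vertex) and a non-$(0)$-square. Running the hub-smoothing procedure described just before \autoref{obs:o1} then decomposes the polyomino boundaries of $G(\col,\col')$ into pairwise disjoint uniform curves, at least one of which---call it $\C$---uses an interior arc. By \autoref{lem:curves}, $\C$ is a directed cycle of $H$; hence, by hypothesis, $F$ contains an edge crossing $\C$, and since $F$ is meant to force $\col$, the endpoints of that edge carry their $\col$-colors. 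So $F$ meets the hypothesis of \autoref{lem:fs2} with $(\col_1,\col_2)=(\col,\col')$, and the lemma gives that $\col'$ disagrees with the coloring imposed by $F$ at some vertex incident to $F$---contradicting the choice of $\col'$. Therefore no such $\col'$ exists and $\col$ is forced by $F$.

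I expect the only genuine subtlety to lie in the existence claim used in the backward direction: that whenever $\col'\neq\col$ there is a uniform curve of $G(\col,\col')$ that actually crosses an interior arc. This rests entirely on the decomposition of polyomino boundaries into pairwise disjoint uniform curves by hub-smoothing---exactly the construction already carried out in the paragraphs preceding \autoref{obs:o1}---together with the remark that the pre-colored vertex rules out the one degenerate case (a global color shift, whose difference graph is a single polyomino with no interior boundary). One must also be careful to treat boundary-to-boundary uniform curves, which are encoded through the outer node of $H$, on exactly the same footing as interior cycles; granting that, \autoref{obs:uniform}, \autoref{obs:o1}, \autoref{lem:curves}, and \autoref{lem:fs2} supply everything else.
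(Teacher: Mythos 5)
Your proof is correct and follows essentially the same route as the paper's: \autoref{obs:uniform} plus \autoref{cor:uniform} for the forward direction, and \autoref{lem:curves} plus \autoref{lem:fs2} applied to the difference graph $G(\col,\col')$ for the backward direction. The only difference is that you explicitly check that $G(\col,\col')$ contains at least one nontrivial uniform curve (via the pre-colored vertex ruling out a global color shift), a point the paper leaves implicit when invoking \autoref{lem:fs2}.
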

\begin{proof}
To prove this theorem in the `if' direction, assume that the coloring $\col$ is forced by $F$, but $F$ does not include an edge in every directed cycle of $H$. Let $\C$ be such a directed cycle. By \autoref{obs:uniform}, $\C$ is a uniform curve. By \autoref{cor:uniform}, every forcing set ($F$ in particular) must include an edge across $\C$, a contradiction. 
To prove the theorem in the `only if' direction, assume that $F$ includes an edge across every directed cycle of $H$, but $F$ does not force $\col$. This implies that there exists another valid coloring $\col'$ that respects $F$. By \autoref{lem:curves}, each uniform curve in the difference graph $G(\col, \col')$ is a directed cycle in $H$. By the theorem statement $F$ includes an edge across each such directed cycle. This along with \autoref{lem:fs2} implies that $\col'$ does not respect $F$, a contradiction. 
\end{proof}

Theorem~\ref{thm:main} suggests that finding a minimum forcing set $F$ for $G$ can be reduced to finding a minimum feedback arc set in $H$ (that is, a minimum set of arcs whose removal leaves $H$ acyclic). This problem is known to be polynomial 
for the class of reducible flow graphs, which include planar digraphs~\cite{L76Thesis, Ram88}.
More precisely, a minimum feedback arc set in the planar digraph $H$ can be determined  in time quadratic in the number of arcs in $H$~\cite{Ram88}, which is $O(m^2n^2)$.
We use the solution to the minimum feedback arc set problem on $H$ to extract a minimum forcing set for $G$: corresponding to each arc $e$ in the minimum feedback arc set for $H$, we include in $F$ the edge crossing $e$.  By Theorem~\ref{thm:main}, $F$ is a forcing set for $G$. The running time of the algorithm that determines $F$ is $O(m^2n^2)$. 

\subsection{Lower Bounds for Non-Standard Miura-ori}
\label{sec:lowerboundm+n+2}
In this section we show the existence of $m \times n$ Miura-ori
folding patterns whose forcing sets are of size exactly $m + n - 2$, the smallest possible for any Miura-ori pattern. 

\begin{theorem}[Lower Bound]
There exist $m \times n$ grid graphs whose forcing sets require exactly $m+n-2$ edges. Here $m$, $n$ refer to the number of vertices in a column or row, respectively, of the grid graph. \0{inserted ``a column or row, respectively, of ``}
\end{theorem}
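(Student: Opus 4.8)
The plan is to exhibit, for every $m$ and $n$, a single explicit $3$-coloring of the $m \times n$ grid graph $G$ whose minimum forcing set has size exactly $m+n-2$. The coloring I would use is the ``diagonal stripe'' coloring $\col(i,j) \equiv i+j \pmod 3$, where $(i,j)$ indexes the vertex in row $i$ and column $j$. This is a proper $3$-coloring of $G$, and after a harmless global shift it has the upper-left vertex colored $0$, so by the correspondence of \autoref{sec:gridgraph} it is realized by a genuine (non-standard) locally flat-foldable Miura-ori MV assignment. The point of this coloring is that the color difference $\pmod 3$ across an arc depends only on the arc's direction: across every horizontal arc the square below exceeds the square above by $+1$, and across every vertical arc the square to the right exceeds the square to the left by $+1$. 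Consequently, in the digraph $H$ built by \autoref{alg:mfs}, all horizontal arcs receive one orientation and all vertical arcs receive one orientation (say, all horizontal arcs point ``west'' and all vertical arcs point ``south'').

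Next I would analyze the cycle structure of $H$. Restricted to the $(m-1)\times(n-1)$ array of interior square corners, a directed step either strictly decreases the column index or strictly increases the row index, so ``row minus column'' strictly increases along any directed walk; hence this part of $H$ is acyclic and every directed cycle of $H$ must pass through the single node $O$ representing the outer boundary. A short bookkeeping of the arcs incident to $O$ (exactly the arcs touching the grid's outer boundary: $n-1$ vertical arcs along the top, $n-1$ along the bottom, $m-1$ horizontal arcs along the left, and $m-1$ along the right) shows that $O$ has exactly $m+n-2$ out-arcs (the top verticals together with the right horizontals) and, symmetrically, exactly $m+n-2$ in-arcs.

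For the upper bound: deleting the $m+n-2$ out-arcs of $O$ leaves $H$ acyclic, since every directed cycle used $O$ but $O$ now has no outgoing arc; by \autoref{thm:main} the $m+n-2$ edges of $G$ crossing these arcs (concretely, the horizontal edges along the top row of $G$ together with the vertical edges along its right column) form a forcing set. For the lower bound I would exhibit $m+n-2$ pairwise arc-disjoint directed cycles of $H$: for each $j=1,\dots,n-1$ the cycle that leaves $O$ through the top, runs straight down the $j$-th column of vertical arcs, and re-enters $O$ at the bottom; and for each $i=1,\dots,m-1$ the cycle that leaves $O$ at the right, runs straight west along the $i$-th row of horizontal arcs, and re-enters $O$ at the left. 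These cycles are arc-disjoint because distinct vertical arc-columns are disjoint, distinct horizontal arc-rows are disjoint, and no cycle mixes the two (in fact they partition all of $E(H)$). By \autoref{obs:uniform} each is a uniform curve, so by \autoref{cor:uniform} every forcing set must contain an edge crossing each of them; since each edge of $G$ crosses exactly one arc and the cycles are arc-disjoint, any forcing set has at least $m+n-2$ edges. (Equivalently, since all cycles meet $O$, the minimum feedback arc set of $H$ equals the minimum $O_{\mathrm{out}}$--$O_{\mathrm{in}}$ arc cut, which by Menger's theorem equals $m+n-2$; one could also simply invoke the general $m+n-2$ lower bound valid for every Miura-ori pattern.) Combining the two bounds, the minimum forcing set of $G$ has size exactly $m+n-2$.

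The only genuinely fiddly part is the boundary bookkeeping: one must be careful about how the arcs touching the grid's outer boundary are attached to $O$ in \autoref{alg:mfs}, since that is what pins the out-degree of $O$ to $m+n-2$ rather than to something larger. The monotonicity argument that the interior of $H$ is acyclic, and the check that the diagonal-stripe coloring is an admissible Miura-ori coloring, are routine. Finally, the degenerate cases $m=1$ or $n=1$ (where $G$ is a path, each of its $n-1$ interior arcs is by itself a uniform curve, and the unique forcing set is all $n-1 = m+n-2$ edges) should be noted separately but require no work.
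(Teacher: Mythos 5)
Your proof is correct, and it is built on the same example as the paper: the diagonal-stripe coloring $\col(i,j)\equiv i+j\pmod 3$, with the lower bound obtained by exhibiting $m+n-2$ pairwise disjoint uniform curves and invoking \autoref{cor:uniform}. Two things differ. First, the paper's disjoint uniform curves are the staircase paths running alongside the diagonal stripes (counted by noting that each consumes exactly two of the $2(m-1+n-1)$ boundary endpoints of interior arcs), whereas yours are the straight rows and columns of arcs; for this particular coloring both families are uniform, both are pairwise arc-disjoint, and both have exactly $m+n-2$ members, so the choice is interchangeable. Second, and more substantively, the paper's written proof only establishes the lower bound (every forcing set has at least $m+n-2$ edges) and leaves the achievability of $m+n-2$ implicit, while you supply the matching upper bound explicitly: you observe that in the digraph $H$ of \autoref{alg:mfs} all horizontal arcs point one way and all vertical arcs the other, so a monotone quantity shows the interior of $H$ is acyclic, every directed cycle passes through the outer node $O$, and deleting the $m+n-2$ out-arcs of $O$ leaves $H$ acyclic; by \autoref{thm:main} the edges crossing those arcs form a forcing set of size $m+n-2$. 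This makes your argument a more complete justification of the word ``exactly'' in the statement, at the cost of the boundary bookkeeping around $O$ that you correctly flag as the only delicate step.
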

\begin{proof}
Consider a grid graph with diagonal stripes of one of the three colors in cyclic order, as depicted in \figref{nonStandardLB}a. 
Each boundary-to-boundary path running alongside a diagonal stripe of squares separates some color $x$ on one side from an adjacent color $y$ on its other side.
%
\begin{figure}[bhtp]
\centering
\includegraphics[width=0.7\linewidth]{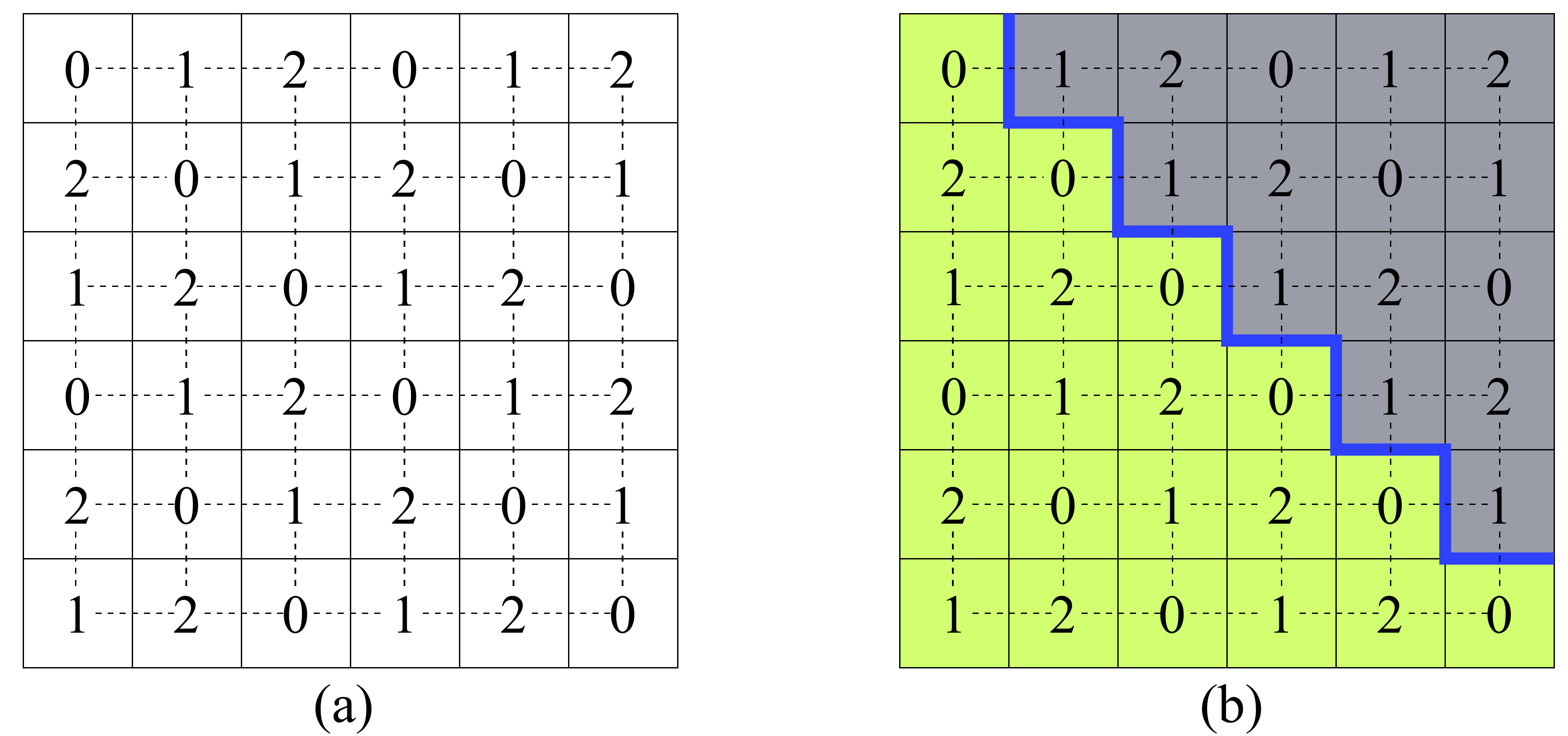}
\caption{(a) 3-colored grid graph (b) Diagonal boundary-to-boundary path is a uniform curve.}
\label{fig:nonStandardLB}
\end{figure}
%
(See path marked in \figref{nonStandardLB}b, which separates colors $0$ and $1$.)  
Such a path is a uniform curve, and by \autoref{cor:uniform} it must be crossed by an edge in the forcing set. 
The number of such uniform curves is exactly $m+n-2$, because each crosses exactly two of the $2(m-1+n-1)$ boundary edges. Because these diagonal paths (and therefore the sets of edges they cross) are pairwise disjoint, the forcing set must include at least as many edges as the number of uniform curves, which is $m + n-2$. 
\end{proof}

This bound is tight, as for any Miura-ori folding pattern, every boundary arc must be part of one of the uniform curves, each uniform curve can only include two such arcs, and each curve must be crossed by a forcing set. Therefore, there can be no pattern requiring fewer than $m+n-2$ edges in its smallest forcing set.

\subsection{Upper Bounds for Non-Standard Miura-ori}

In this section we give a simple $O(mn)$ time algorithm that establishes an upper bound of
$\lceil \frac{mn}{2} \rceil$ on the size of a forcing set
for any flat-foldable MV-assignment, 
making the standard Miura-ori in a sense the worse case. \0{For non-standard Miura, this forcing set may not be of minumum size; finding forcing sets of minimum size
for non-standard Miura-ori is addressed next in \autoref{sec:standardmiura}.} In this section we again make use of the grid graph representation of the Miura-ori discussed in \autoref{sec:gridgraph}. 

\begin{theorem}
\label{thm:miura_Ndiv4}
Given any locally flat-foldable MV assignment of a $m \times n$ Miura-ori crease pattern, there is an algorithm that finds a forcing set of size $\lceil \frac{mn}{2} \rceil$ in time $O(mn)$.
\end{theorem}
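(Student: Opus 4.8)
The goal is to exhibit, for an arbitrary locally flat-foldable MV assignment on an $m\times n$ Miura-ori, a forcing set of size $\lceil mn/2\rceil$ computable in linear time. Working in the grid-graph picture of \autoref{sec:gridgraph}, I want to produce a set of edges $F$ that crosses every uniform curve of $G$; by \autoref{cor:uniform} this is necessary, and by \autoref{thm:main} (equivalently, crossing every directed cycle of the graph $H$) it is also sufficient to force $\col$. So the plan reduces to: \emph{pick a set of at most $\lceil mn/2\rceil$ arcs of the square grid that hits every uniform curve}, then take the edges dual to those arcs.

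The natural candidate is a domino-style construction. Tile the $m\times n$ array of squares by horizontal dominoes when $n$ is even (and handle the odd column with vertical dominoes, or omit one cell when $mn$ is odd, exactly as in \autoref{sec:standardmiura}); each domino covers two squares and we put into $F$ the single edge crossing the one arc interior to that domino. This yields exactly $\lceil mn/2\rceil$ edges. The content of the proof is then to verify that this $F$ is forcing for \emph{every} locally flat-foldable $\col$, not just the standard one — equivalently, that every uniform curve of $G$ (for this particular $\col$) is crossed by one of the chosen edges. First I would observe that the chosen arcs are precisely those arcs lying strictly between the two squares of a domino; a uniform curve is a boundary-to-boundary path (or cycle) of interior arcs, so it suffices to show no uniform curve can travel from boundary to boundary (or close up) while avoiding all the ``inside-domino'' arcs. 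Equivalently, in the quotient structure obtained by contracting each domino to a point, the surviving arcs carry no uniform curve that separates two colors — and here one uses the structural rigidity of locally flat-foldable colorings: by \autoref{cor:stdflip}, at every node the lateral toes agree iff the leg and middle toe disagree, which constrains how color-differences can propagate across a pair of squares sharing a domino. I expect the cleanest phrasing is: contracting a domino identifies its two squares, so any uniform curve of $G$ that misses the inside-domino arc must, locally around that domino, behave like a uniform curve of the smaller grid, and an induction on the number of dominoes (or a direct parity/counting argument on how uniform curves meet the boundary, as in the Lower Bound proof) closes it.

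An alternative, and perhaps safer, route avoids ever mentioning uniform curves for non-standard $\col$ directly: use \autoref{lem:fs2} together with \autoref{lem:curves}. Given any competing coloring $\col'\neq\col$, the difference graph $G(\col,\col')$ has at least one uniform curve $\C$; I must show the domino $F$ crosses $\C$. Since $\C$ is a polyomino boundary in the difference graph, it encloses a nonempty set of squares on which $\col'$ differs from $\col$; a short combinatorial argument shows that a region of squares with nonempty interior boundary, in a grid tiled by dominoes, cannot have its entire boundary run along domino seams — because then the region would be a union of whole dominoes, yet flipping a color-class on a union of whole dominoes can be shown (again via \autoref{cor:stdflip}, applied node-by-node at the seams) to violate local flat-foldability unless $\col'=\col$. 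This gives a crossing edge in $F$, hence by \autoref{lem:fs2} $\col'$ does not respect $F$, so $F$ forces $\col$.

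The main obstacle is exactly this verification step: showing that the \emph{fixed} domino set works for \emph{all} locally flat-foldable assignments, since the uniform curves depend on $\col$ and there is no longer the convenient ``every mountain zig-zag is covered'' coincidence that made the standard case transparent. I anticipate the key lemma being that a uniform curve cannot run entirely along the seams between paired dominoes — equivalently that a union of whole dominoes cannot be a color-shift region for a valid recoloring — and I would prove it by a local argument at each seam node using \autoref{cor:stdflip} (the four creases at that node have exactly one domino-seam crease among them, forcing a contradiction with the leg/middle-toe relation). Once that lemma is in hand, the linear running time is immediate: the domino tiling and the dual edge selection are both $O(mn)$, and the size is $\lceil mn/2\rceil$ by construction, completing the proof of \autoref{thm:miura_Ndiv4}.
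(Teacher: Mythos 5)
Your construction cannot work, and the paper itself says so: immediately after the standard-Miura section it notes that ``for non-standard MV assignments, domino tilings are not guaranteed to produce forcing sets,'' giving the bird's foot with $\mu(e_1)=-1$, $\mu(e_2)=+1$, $\mu(e_3)=-1$, $\mu(e_4)=-1$ as a counterexample (the domino covers $e_2,e_4$, but these do not determine $e_1,e_3$ because the lateral toes disagree, so \autoref{cor:stdflip} gives nothing). The ``key lemma'' you anticipate --- that a uniform curve cannot run entirely along domino seams, equivalently that a union of whole dominoes cannot be a color-shift region --- is false. A minimal counterexample: take the $2\times 2$ grid colored $0,1$ on top and $1,2$ on the bottom, tiled by two horizontal dominoes, so $F$ consists of the two horizontal edges (both carrying difference $+1$). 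The coloring $0,1$ / $2,0$ is also valid, has the same differences across both edges of $F$, and the same fixed corner color; the horizontal seam between the two dominoes is a uniform curve missed by $F$. More structurally, your $F$ is a fixed set independent of $\col$, so the claim that it forces \emph{every} locally flat-foldable coloring is exactly the claim that it is a controlling set --- but by \autoref{thm:spanningtree} a controlling set must contain a spanning tree of the grid graph, hence at least $mn-1$ edges, whereas your matching of $\lceil mn/2\rceil$ disjoint edges is not even connected. Any proof of \autoref{thm:miura_Ndiv4} with a coloring-independent edge set is therefore doomed; the forcing set must be chosen adaptively as a function of $\col$.

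That adaptivity is precisely what the paper's proof does. It sweeps the grid two rows at a time, left to right, seeding the process with one vertical edge at the upper-left corner, and then for each $2\times 2$ block consisting of two already-forced vertices and two new ones it adds a single edge whose \emph{identity depends on the block's coloring}: if the block is $2$-colored, the vertical edge between the two new vertices suffices; if it is $3$-colored, one must instead add the horizontal edge into the new vertex that does \emph{not} see both other colors inside the block (adding the other horizontal edge, or the vertical one, would leave an ambiguity). Each block contributes one edge and two newly forced vertices, giving $\lceil mn/2\rceil$ edges in $O(mn)$ time. Your general framing --- that it suffices to hit every uniform curve, by \autoref{cor:uniform}, \autoref{lem:fs2}, and \autoref{thm:main} --- is sound, and your instinct that the verification step is ``the main obstacle'' was correct; but the obstacle is not a missing argument, it is that the statement you would need is false, so the fix is to change the construction, not to find a cleverer proof of the lemma.
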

\begin{proof}
Let $G$ be the $m \times n$ $3$-colored grid graph corresponding to the
Miura-ori MV assignment with the color of the upper-left vertex fixed to 0. 
We create a forcing set $F$ for $G$ by assigning an orientation to the edges in $G$ two rows at a time from top to bottom, and within each pair of rows we work left to right.
When done, the set $F$ will be a set of directed edges. Each directed edge $(v_i,v_j)$ with source $v_i$ and destination $v_j$
has weight $w(v_i,v_j)$ of $\pm 1$
such that $\col(v_j) = \col(v_i) + w(v_i,v_j) \mod 3$, where $\col(v)$ is the color of vertex $v$.     

We will assume that $m$ (the number of rows) is even; the algorithm is the same when $m$ is odd, with
just a minor modification (not discussed  here) to handle the last row.
Starting with the top two rows, 
we begin by adding to $F$ the vertical edge directed from the upper-left vertex to the one below. 
Because the color of the upper-left vertex is fixed to 0, 
this forces the color of the vertex below it.  
We then look at each successive $2 \times 2$ block formed by two already-forced vertices $v_\ell$, $v_\ell'$ and
the two new vertices $v_r,v_r'$ to the right of them. 
If it is a $2$-colored block, then adding to $F$ the directed edge $(v_r,v_r')$ with its associated weight 
is enough to force the colors of the new vertices. (See \figref{miura_Ndiv4}a.)
It is easy to verify that the only coloring for the new vertices $v_r,v_r'$
that is consistent with the added forcing edge and the already forced colors of vertices $v_\ell,v_\ell'$ is
one in which $\col(v_r )= \col(v_\ell')$ and $\col(v_r') = \col(v_\ell)$.
If the block is $3$-colored, then one of the two new vertices has neighbors of both colors within the block. (See vertex $v_r'$ in \figref{miura_Ndiv4}b.)  
In this case, adding to $F$ the horizontal edge connecting an already forced vertex to the new vertex that does not have neighbors of both colors is enough to
force the colors of the new vertices.
\begin{figure}[t]
\centering
\includegraphics[width=0.7\linewidth]{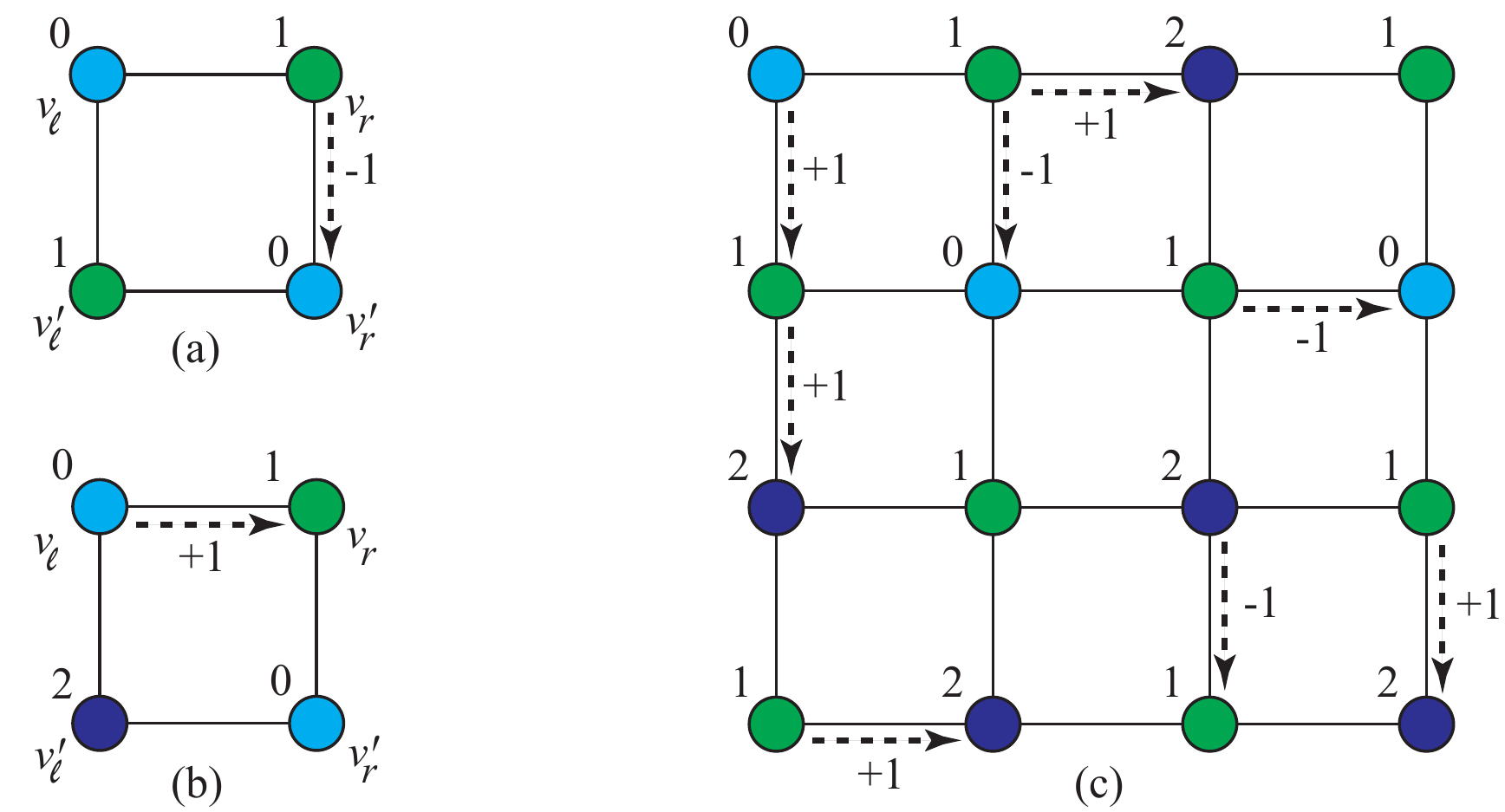}
\caption{
(a) $2$-colored block rule. (b) $3$-colored block rule. (c) Forcing set generated by the algorithm described in the proof in Theorem~\ref{thm:miura_Ndiv4}. 
}
\label{fig:miura_Ndiv4}
\end{figure}

After forcing all colors on a pair of rows in this manner, we transition to the next row pair below it by
looking at the $2 \times 2$ block formed by the first two vertices on the bottom row of the previous row pair and
the first two vertices on the 
top row of the new row pair. Because the top two vertices in the block are already forced, we can apply the same rules from above to add one
edge to $F$ that forces the two lower vertices in the block. (Note that the correctness of the block rules does not depend on which pair of
adjacent vertices in the block are already forced.) 
We do this again with the first $2 \times 2$ block of the new row pair, and from there continue left to right across the new row pair as before.
This repeats until all row pairs have been colored, so the algorithm completes in time linear in the size of $G$, which is $O(mn)$. 

When processing each $2 \times 2$ block, we gain two new colored grid vertices for only one forced edge. Therefore the total number of forced edges is $\frac{mn}{2}$. 
(When $m$ and $n$ are both odd, we need the ceiling of this term because one edge is used to force the bottom right vertex.)
With the established correspondence between grid graph colorings and Miura-ori crease pattern MV assignments, the 
forcing set $F$ of $G$ immediately gives a forcing set for the Miura-ori MV assignment where each forcing edge in $F$ is replaced by its dual crease. 
\end{proof}

\section{Forcing Set Characterization}
\label{sec:forcingsetcharacterization}
In this section we describe algorithms for testing whether a given set is forcing for a given Miura-ori crease pattern and whether a given partial mountain-valley assignment can be completed to a locally flat-foldable Miura-ori. 
Combining these two results together will give us an efficient algorithm for the problem of, given a forcing set, determining the folding pattern that it forces.

\subsection{From Grid Colorings to Eulerian Digraphs}

The following lemma characterizes the locally flat foldable MV assignments (or equivalently 3-vertex-colorings of a grid) in terms of the properties of the digraph used in our algorithm for finding a minimum forcing set.
The lemma will  be used shortly in testing for forced local flat foldability (\autoref{sec:testfold}).

\begin{lemma}
\label{lem:euler}
Let $G$ be a grid graph. Then the construction of an Eulerian digraph $H$ from a 3-coloring of $G$, in Step 1 of \autoref{alg:mfs}, is a bijection between 3-colorings of $G$ (with a fixed choice of color for one vertex) and Eulerian orientations of the graph formed by adding one outer vertex to~$G$.
\end{lemma}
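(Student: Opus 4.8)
The plan is to show the map $\col \mapsto H(\col)$ from Step~1 of \autoref{alg:mfs} is well-defined (lands among Eulerian orientations), injective, and surjective. First I would observe that at every interior square corner $p$ of $G$ exactly four arcs meet, so in the graph $G^+$ obtained by adding the outer vertex, every vertex has even degree (the outer vertex absorbs all boundary arc-endpoints and its degree equals the number of boundary arcs, which is even); hence $G^+$ is Eulerian in the undirected sense and ``Eulerian orientation'' (indegree $=$ outdegree at every vertex) makes sense. The orientation rule in Step~1 depends only on the unordered pair $\{s_i,s_j\}$ and the color difference $\col(s_j)-\col(s_i)\bmod 3$ across the arc, so it produces a well-defined orientation of $G^+$; I must check this orientation is balanced at each vertex.

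For the balance (well-definedness) step: fix an interior corner $p$ with its four incident squares $s_1,s_2,s_3,s_4$ in cyclic order. Walking around $p$, the cyclic sequence of color differences $\col(s_2)-\col(s_1),\ \col(s_3)-\col(s_2),\ \col(s_4)-\col(s_3),\ \col(s_1)-\col(s_4)$, each in $\{+1,-1\}\pmod 3$, must sum to $0\pmod 3$; since each term is $\pm1$ and there are four of them, the only way to sum to $0\bmod 3$ is to have two $+1$'s and two $-1$'s (sum $0$) — a sum of $\pm4$ or $\pm2$ is impossible $\bmod 3$ only if... wait, I should be careful: $+4\equiv1$, $-4\equiv -1$, $+2\equiv-1$, $-2\equiv1 \pmod 3$, so indeed the sum is $0\bmod 3$ exactly when there are two $+1$'s and two $-1$'s. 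Translating each ``$+1$ going clockwise around $s_i$'' vs.\ ``counterclockwise'' into in/out at $p$, two arcs point into $p$ and two point out, so $H$ is balanced at $p$. At the outer vertex, balance follows because total indegree equals total outdegree over all of $H$ once every interior vertex is balanced (a standard counting argument). This shows the image lies in the set of Eulerian orientations of $G^+$.

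Injectivity is the easy direction: given the oriented graph $H(\col)$, the orientation of each arc together with the rule recovers the color difference $\col(s_j)-\col(s_i)\bmod 3$ across that arc; since $G$ (being connected) with a fixed color on one vertex determines all colors from these differences, two colorings with the same $H$ agree. Surjectivity is where the main obstacle lies: given an arbitrary Eulerian orientation $H'$ of $G^+$, I want to reconstruct a coloring. Define a candidate color difference of $\pm1\bmod3$ across each arc from the arc's direction via the inverse of the Step~1 rule, then try to integrate: pick the fixed vertex, give it color $0$, and propagate along paths in $G$. The obstruction to this being well-defined is exactly that the accumulated difference around every cycle of $G$ must vanish $\bmod 3$; since $G$ is planar, it suffices to check this around each bounded face, i.e.\ around each interior corner $p$ — and that is precisely the balance condition at $p$ in $H'$, run in reverse (two-in/two-out forces two $+1$'s and two $-1$'s around $p$, which sum to $0\bmod 3$). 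Hence the integration succeeds, yields a valid $3$-coloring (adjacent squares differ by $\pm1\not\equiv0$, so the coloring is proper), and by construction its image under Step~1 is $H'$. I would close by noting well-definedness $+$ injectivity $+$ surjectivity give the claimed bijection, and that the only subtlety beyond bookkeeping is the $\bmod 3$ parity argument — already essentially recorded as \autoref{obs:smooth} — guaranteeing balance corresponds exactly to integrability.
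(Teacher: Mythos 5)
Your proof is correct, but it takes a genuinely different route from the paper's. The paper argues both directions by explicit case enumeration on $2\times2$ blocks: it lists (in a figure) all ways the four squares around an interior corner can be colored and observes that each yields indegree $=$ outdegree $=2$ at that corner; for the converse it runs a constructive left-to-right, top-to-bottom sweep, coloring each new block from the orientation pattern and the two already-colored squares. You replace the case analysis with an arithmetic argument --- the telescoping sum of the four color differences around a corner is $0 \bmod 3$, and four terms from $\{\pm1\}$ sum to $0 \bmod 3$ exactly when two are $+1$ and two are $-1$, which is exactly the balance condition --- and you replace the constructive sweep with a discrete integration argument: the differences prescribed by an Eulerian orientation are integrable because the obstruction lives on cycles, cycles are generated by bounded faces in a planar graph, and the face condition is again the two-in/two-out balance. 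Your route buys two things: it avoids the figure-based enumeration entirely, and it explicitly certifies a consistency point the paper's sweep leaves implicit, namely that the coloring produced from an Eulerian orientation agrees with that orientation on \emph{every} arc, not just the arcs consulted during the sweep. What it costs is a little more machinery (the cycle-space-generated-by-faces fact) and the need to verify carefully that ``$+1$ oriented clockwise around $s_i$'' translates to a consistent in/out convention at $p$ as one walks around the corner --- you assert this translation rather than check it, and it is the one place where a sign error could hide, so it deserves a sentence. Your handling of the outer vertex (balance there follows from balance everywhere else by global in/out counting) is also cleaner than leaving it unaddressed. Both proofs are valid; yours is closer in spirit to \autoref{obs:smooth}, as you note.
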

\begin{proof}
Consider a $2\times2$ block of squares and let $x$ be the color of the top left square $s$ in this block. Assume first that $G$ has a valid $3$-coloring. The two squares adjacent to $s$ in this block can be of either the same or different colors. If they are of different colors, then the square diagonally opposite to $s$ must have color $x$ and the arc orientation is as shown in \figref{nonStandardUB-Eulerian}a. If the two squares adjacent to $s$ have the same color (as in Figures~\ref{fig:nonStandardUB-Eulerian}b--\ref{fig:nonStandardUB-Eulerian}e), then there are two choices for the  color of the square diagonally opposite to $s$. All possible cases are depicted in \figref{nonStandardUB-Eulerian}, and each of these cases yields the indegree and outdegree equal to 2 at the center node. This shows that $H$ is Eulerian. 

In the other direction, if $H$ is Eulerian, then we can process $2\times2$ block of squares, starting with the top left block, and assign colors to the block squares according to the patterns showed in \figref{nonStandardUB-Eulerian}. We then look at each successive $2\times2$ block formed by the right column of the previous block and the one to its right, so that each successive block includes two colored squares to guide the coloring of the other two squares. After coloring all squares in the first row pair, we transition to the next row pair formed by the bottom row of the previous pair and the one immediately below it, and continue this process. 
The final result is a valid $3$-coloring of $G$.  
\end{proof}

\begin{figure}[hpt]
\centering
\includegraphics[width=\linewidth]{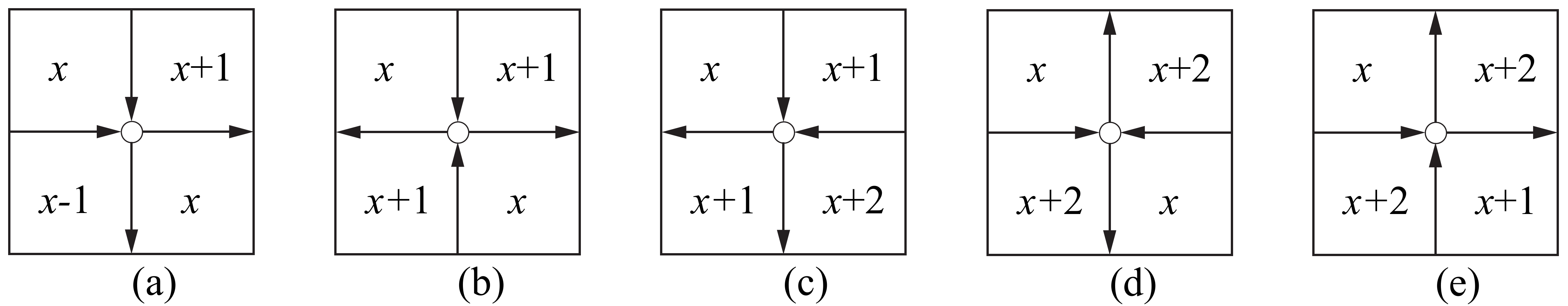}
\caption{The in degree and the out degree are both $2$.}
\label{fig:nonStandardUB-Eulerian}
\end{figure}

\subsection{Testing Whether a Set is Forcing}
Given a $3$-colored $m\times n$ grid graph $G$ and a set $F$ of edges in $G$, we can easily determine in time $O(mn)$ whether $F$ is a forcing set or not. To do so, we first derive from $G$ a directed graph $H$ as described in Step 1 of the minimum forcing set algorithm (see \autoref{alg:mfs}). As mentioned earlier, this step takes time $O(mn)$. 
Corresponding to each edge $f \in F$, we delete from $H$ the dual arc crossing $f$. Let $H'$ be the graph obtained from $H$ after removing all such arcs. Theorem~\ref{thm:main} implies that  $F$ is a forcing set if and only if $H'$ is acyclic. Thus testing the forcing property for $F$ reduces to testing the acyclic property for $H'$, which
can be accomplished by means of a simple depth-first traversal of $H$ in time $O(mn)$.

\subsection{Testing for Local Flat Foldability}
\label{sec:testfold}
Given an assignment of mountain/valley to a subset $S$ of the creases in a $m \times n$ Miura-ori pattern, our goal in this 
section is to test whether there exists a locally flat-foldable fold on all pattern creases that is consistent 
with the input $S$. We show that this task can be accomplished in $O(mn\sqrt{mn})$ time.

Let $G$ be the graph representation of the Miura pattern, with vertices interior to the squares of a $m \times n$ grid (as in \figref{uniform}a). 
 Precolor $0$ the top left vertex of $G$ and assign weights $+1$ or $-1$ to all edges of $G$ associated with creases in $S$, consistent with the MV assignment of the corresponding creases. Each edge not associated with a crease in $S$ has a null weight. Based on the correspondence between local flat foldability of the Miura-ori crease pattern and $3$-colorability of $G$ (established in \autoref{sec:gridgraph}), testing for local flat-foldability reduces to showing that $G$ has a valid $3$-coloring that respects the non-null edge weights. 

\begin{figure}[hpt]
\centering
\includegraphics[width=\linewidth]{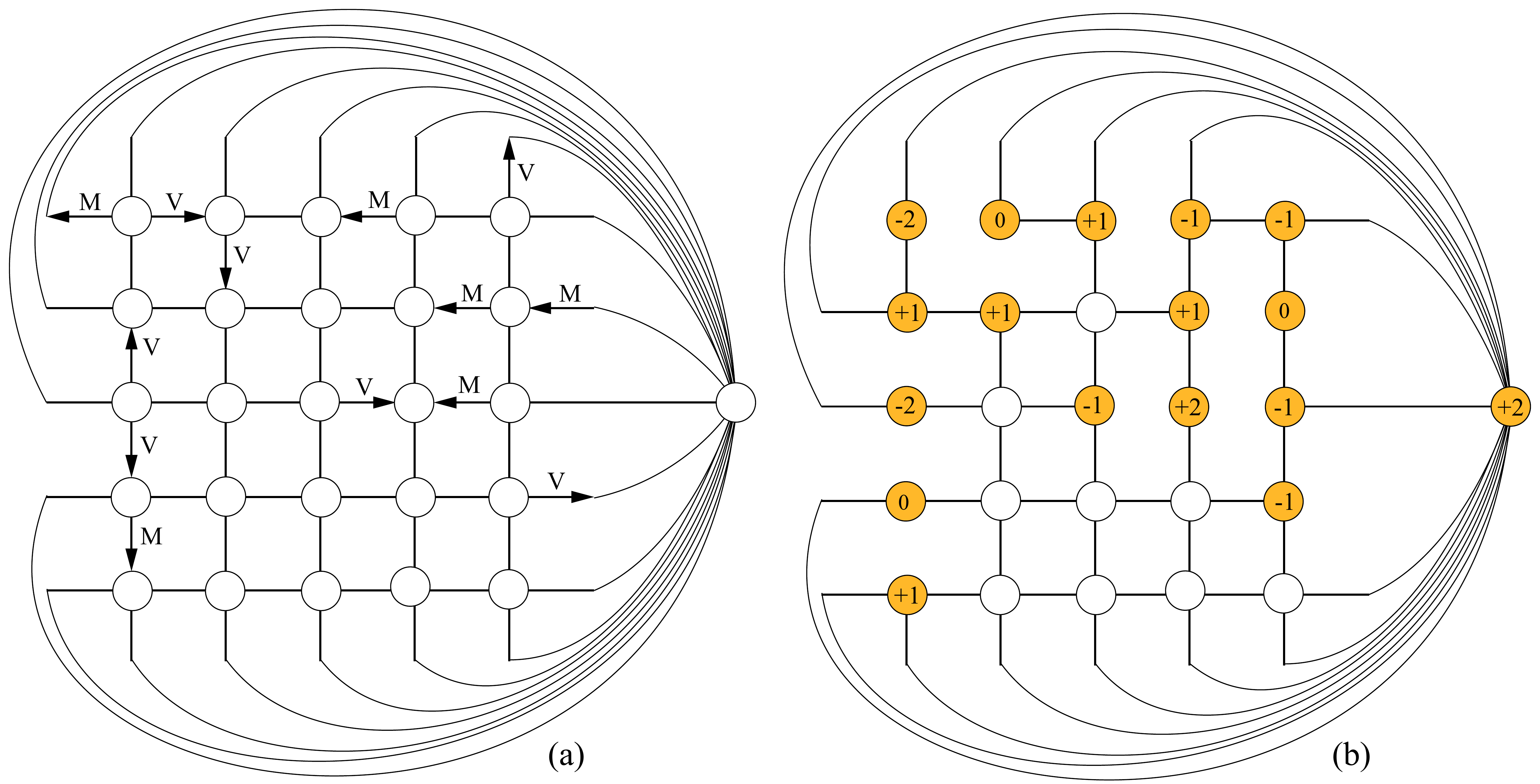}
\caption{(a) Graph representation $H$ (b) Flow network setup with demands ($-$) and supplies ($+$).}
\label{fig:nonStandard-testing}
\end{figure}

Let $H$ be the directed representation for $G$ derived as in Step 1 of the the minimum forcing set algorithm (see \autoref{alg:mfs}). 
Note that 
arcs in $H$ corresponding to creases in $S$ 
will be oriented in $H$, but the rest of them start as undirected arcs. See \figref{nonStandard-testing}a for an example. 
By \autoref{lem:euler}, $G$ has a valid $3$-coloring if and only if $H$ can be extended to a directed Eulerian graph in which  every  node (with the exception of the outer node) has  in degree and out degree equal to $2$. Thus our goal is to orient all undirected  arcs  of $H$ such that the resulted graph is directed Eulerian. 

To determine an Eulerian orientation for $H$, we set up a network flow on the undirected arcs of $H$, where each arc has capacity $+1$ in both directions. Corresponding to each arc $(u,v)$ directed from $u$ to $v$ in $H$, we add one unit of demand at the tail $u$ and one unit of supply at the head $v$. Thus the total supply equals the total demand. 
\figref{nonStandard-testing}b shows the flow network corresponding to the directed graph from \figref{nonStandard-testing}a; negative values at nodes indicate demand and positive values indicate supply. 
We need to determine a feasible flow from supply nodes to demand nodes that satisfies the demands at all nodes, subject to the unit capacity constraints. Because there are no sources or sinks involved here, determining a feasible flow in $H$ reduces to finding a circulation in $H$, which can be accomplished in time $O(mn\sqrt{mn})$ using Miller's algorithm (see section 4 in~\cite{Miller95}). 
If there is no solution to the flow problem, then 
there is no locally flat-foldable fold on all pattern creases that is consistent with the input. Otherwise, the solution to the flow problem is an Eulerian orientation that includes all the input creases. \figref{nonStandard-testing2}a shows an Eulerian orientation that satisfies the supply and demand requirements for the flow network depicted in \figref{nonStandard-testing}b. 

\begin{figure}[hpt]
\centering
\includegraphics[width=\linewidth]{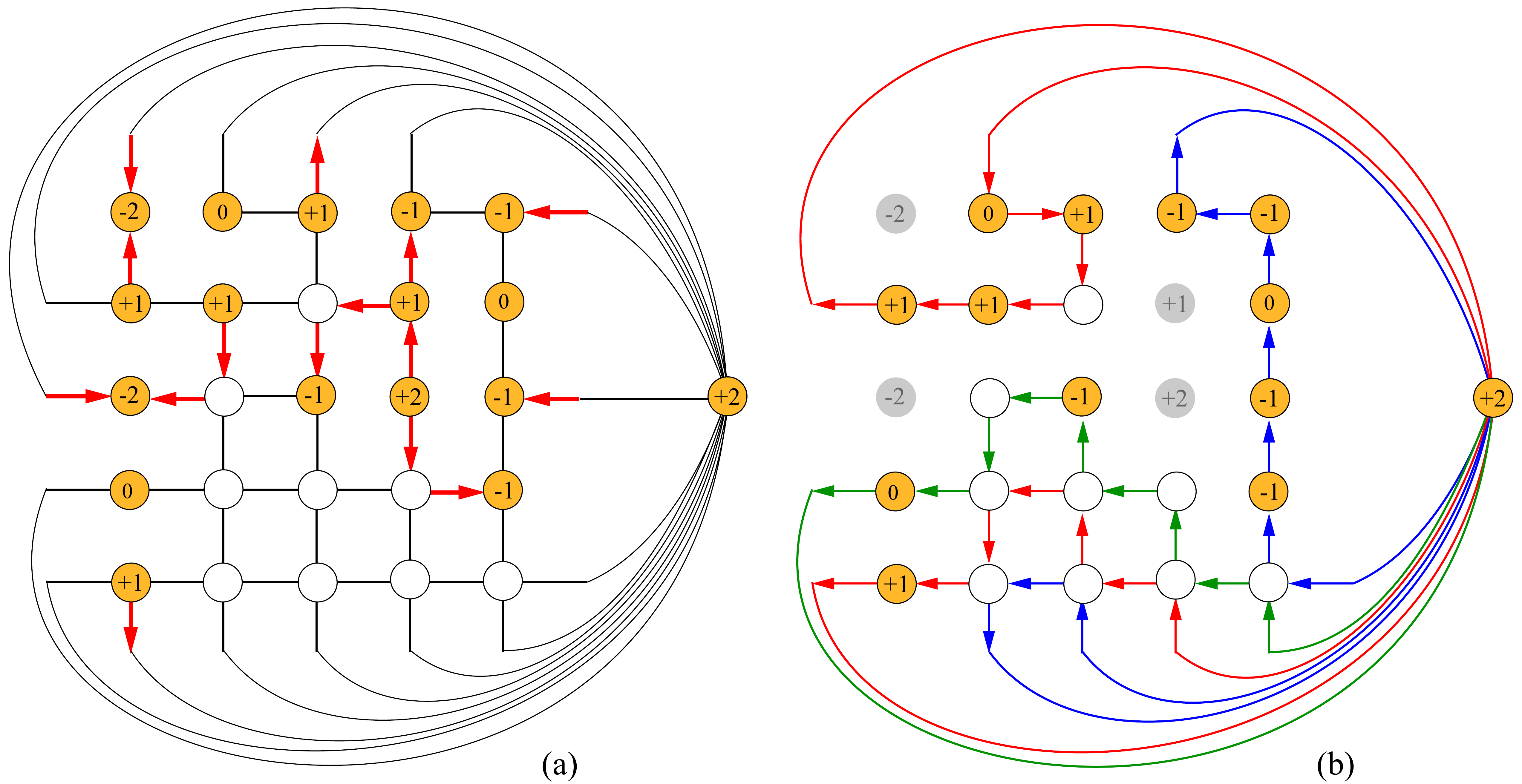}
\caption{(a) Network flow solution (thick directed arcs). (b) Partitioning $H'$ into cycles.}
\label{fig:nonStandard-testing2}
\end{figure}

The Eulerian orientation determined in the previous step includes all creases in the input set $S$, but it may leave some arcs in  $H$ undirected. Let $H'$ be the subgraph of $H$ consisting of all undirected arcs (it may have several connected components). Each node in $H'$ has even degree, therefore $H'$ is an Eulerian undirected graph. By Veblen's theorem~\cite{Veb-AM-12}, $H'$ can be partitioned into simple disjoint cycles. This can be accomplished in time linear in the number of graph arcs, which is $O(mn)$. 
Choose an arbitrary orientation for each cycle. (See \figref{nonStandard-testing2}b for an example.) The result is the desired Eulerian orientation on the whole graph $H$. (See \figref{nonStandard-testing3}a,  which combines the arc orientations from Figures~\ref{fig:nonStandard-testing}a,~\ref{fig:nonStandard-testing2}a and~\ref{fig:nonStandard-testing2}b.)
%
\begin{figure}[hpt]
\centering
\includegraphics[width=\linewidth]{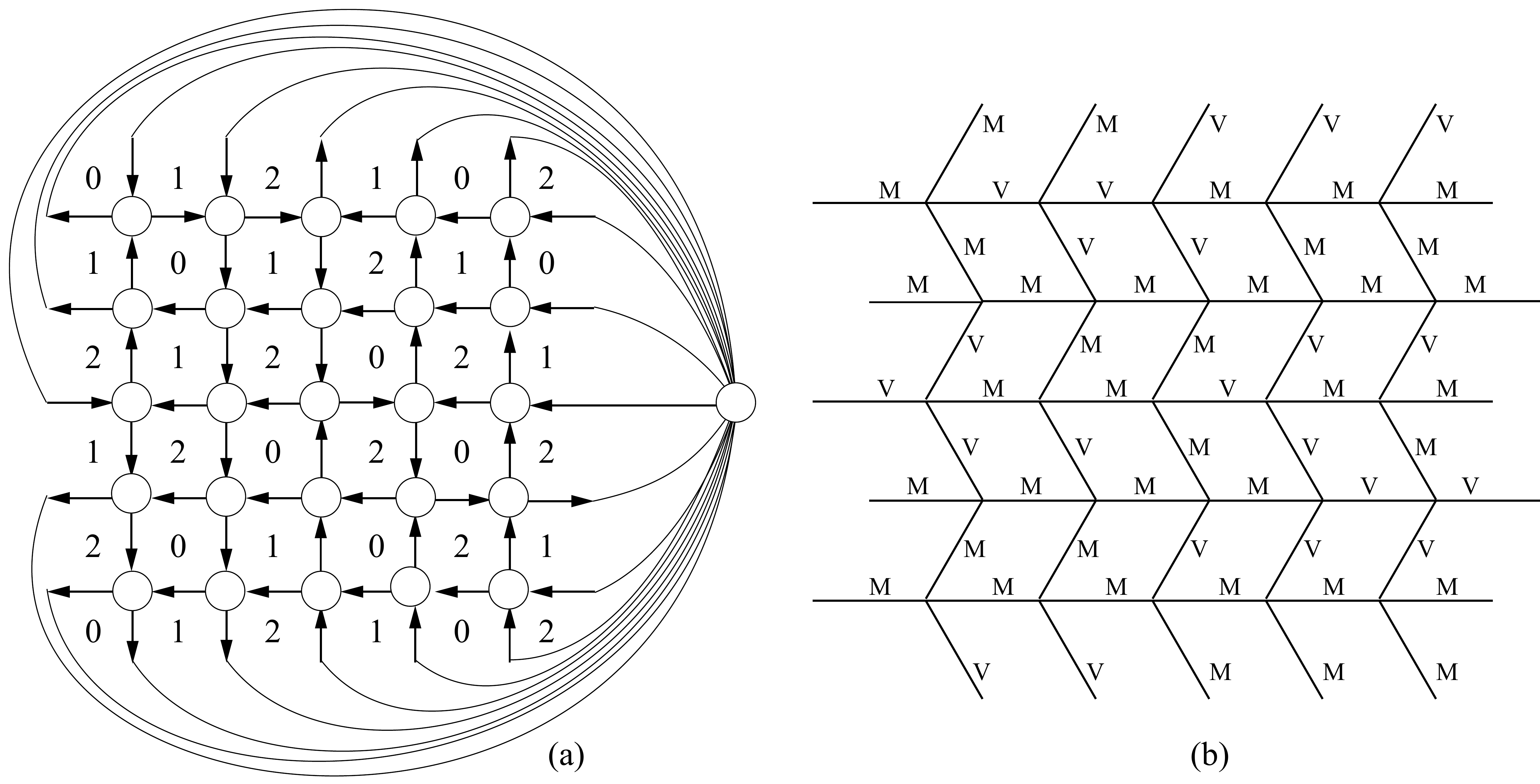}
\caption{(a) $3$-coloring consistent with the Eulerian orientation (b) Equivalent MV assignment.}
\label{fig:nonStandard-testing3}
\end{figure}

From the Eulerian orientation of $H$ we determine a $3$-coloring of the grid graph following the approach used in Step 1 of the minimum forcing set algorithm (\autoref{alg:mfs}): if an arc $e$ shared by squares $s_i$ and $s_j$ is oriented clockwise around $s_i$, then $\col_(s_j) = \col(s_i)+1$, otherwise $\col_(s_i) = \col(s_j)+1$. \figref{nonStandard-testing3}a shows the correspondence between the $3$-coloring and the Eulerian orientation of $H$.  
Finally we use the established correspondence between a $3$-coloring of the grid graph and a folding pattern to determine a mountain/valley assignment for the all creases, consistent with the input $S$. \figref{nonStandard-testing3}b shows the mountain/valley assignment corresponding to the $3$-coloring from \figref{nonStandard-testing3}a.

\section{Tight Controlling Sets}
\label{sec:tightcontrol}
This section introduces the notion of controlling forcing sets, and gives tight bounds on the size of controlling sets for the Miura-ori crease pattern.
For a given crease pattern $C$, 
a subset $F \subseteq C$ is \emph{controlling forcing}, or simply \emph{controlling}, if it is forcing for every mountain-valley assignment $\mu$ on $C$. 
\begin{figure}[t]
\centering
\includegraphics[width=\linewidth]{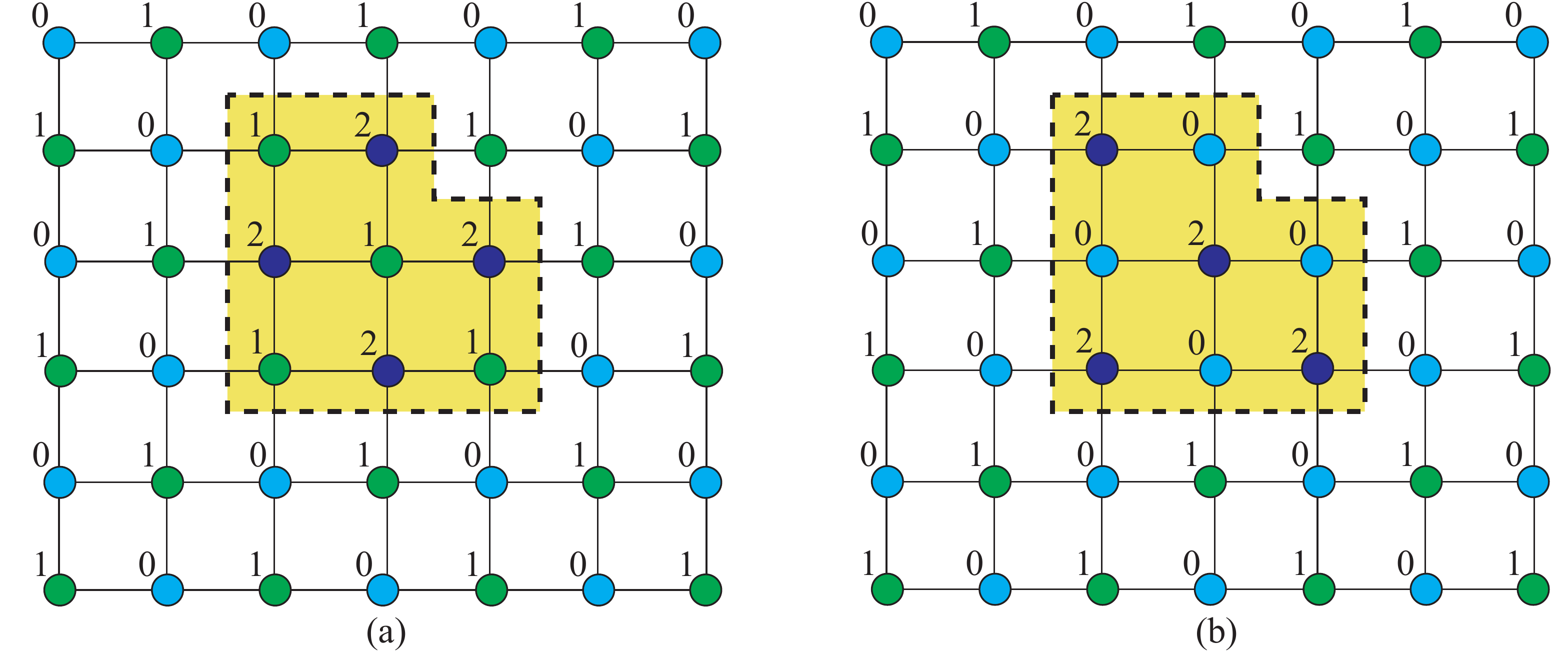}
\caption{The shaded region marks a single connected component induced by the input set $S$. (a) A checkerboard coloring with every other vertex colored 1. (b) An alternate checkerboard coloring, with every other vertex colored 0. 
A controlling set must include an edge connecting the shaded component to the rest of the graph, otherwise it cannot distinguish between colorings (a) and (b). 
}
\label{fig:checkerboard}
\end{figure}
%
As we show in this section, controlling sets for Miura-ori must be significantly larger than forcing sets for particular MV-assignments.

\begin{theorem}
\label{thm:spanningtree}
A subset $S$ of the Miura-ori creases is controlling if and only if it contains a spanning tree of the dual grid graph. 
\end{theorem}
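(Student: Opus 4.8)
The statement is a biconditional, so I would prove each direction separately, working with the grid graph $G$ (the dual of the Miura-ori) and the correspondence between MV-assignments and $3$-colorings of $G$ with one vertex pre-colored. Let $S$ be a set of creases, and let $S_G$ denote the corresponding set of edges of $G$.

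\medskip\noindent\textbf{($\Leftarrow$) If $S_G$ contains a spanning tree, then $S$ is controlling.} Suppose $T \subseteq S_G$ is a spanning tree of $G$. Fix any MV-assignment $\mu$, equivalently any $3$-coloring $\col$ of $G$, and suppose $\col'$ is another $3$-coloring that agrees with $\col$ on $S$; I claim $\col' = \col$. The key observation is that an edge $(u,v)$ of $G$ belonging to a forcing/controlling set pins down not just agreement of some local folding, but the \emph{color difference} $\col(v) - \col(u) \bmod 3$: if $\col'$ respects the crease $(u,v)$, then $\col'(v) - \col'(u) = \col(v) - \col(u) \bmod 3$, because the weight on that edge is determined by the MV-value of the crease. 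Now root $T$ at the pre-colored vertex $r$, where $\col'(r) = \col(r) = 0$ is forced by the pre-coloring. Induct along $T$ from the root: if $\col'(u) = \col(u)$ for the parent $u$ of $v$, then since the edge $(u,v) \in T \subseteq S_G$ is respected, $\col'(v) - \col'(u) = \col(v) - \col(u)$, hence $\col'(v) = \col(v)$. So $\col'$ agrees with $\col$ on all of $V(G)$, i.e.\ $\col' = \col$. Since $\mu$ was arbitrary, $S$ is controlling.

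\medskip\noindent\textbf{($\Rightarrow$) If $S_G$ does not contain a spanning tree, then $S$ is not controlling.} If $S_G$ contains no spanning tree, then the subgraph $(V(G), S_G)$ is disconnected: there is a nontrivial component $K \subsetneq V(G)$ with no edge of $S_G$ joining $K$ to $V(G)\setminus K$. I will exhibit one MV-assignment for which $S$ fails to be forcing, using a checkerboard-type coloring as in \figref{checkerboard}. Bipartite-two-color the grid squares black/white in the standard checkerboard fashion. Define $\col$ by coloring all black squares $0$ and all white squares $1$; this is a valid $3$-coloring since adjacent squares get distinct colors, and it corresponds to a (non-standard) MV-assignment $\mu$. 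Now define $\col'$ to agree with $\col$ outside $K$, but on $K$ swap the role: black squares of $K$ stay $0$ while white squares of $K$ get color $2$ instead of $1$ (or, more uniformly, add a fixed increment mod $3$ to the ``minority parity'' of $K$). One checks $\col'$ is still a proper $3$-coloring: inside $K$, black--white pairs are now $0$ vs $2$, still distinct; and since $S_G$ has no edge across the boundary of $K$, there is no crease in $S$ whose two endpoints straddle the $K$/non-$K$ cut, so $\col'$ agrees with $\col$ on every edge of $S_G$ even though $\col' \neq \col$. (If the pre-colored vertex lies in $K$ we instead modify the complementary side, which is also nonempty since $K$ is a proper subset.) Hence $S$ fails to be forcing for $\mu$, so $S$ is not controlling.

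\medskip\noindent\textbf{Main obstacle.} The forward direction is where the care is needed: I must make sure the re-coloring $\col'$ of the component $K$ is (i) a genuinely valid $3$-coloring everywhere, and (ii) distinct from $\col$, and (iii) still respects every edge of $S_G$. Point (iii) is immediate from the disconnectedness of $S_G$; the subtlety is choosing the checkerboard base coloring and the increment so that (i) and (ii) both hold regardless of the shape of $K$ and regardless of where the single pre-colored vertex sits. Using a two-colored (rather than three-colored) base ensures that shifting one parity-class of $K$ by a fixed amount mod $3$ never creates a monochromatic edge, neatly sidestepping the case analysis one would otherwise face with an arbitrary $3$-coloring; this is precisely the trick illustrated in \figref{checkerboard}.
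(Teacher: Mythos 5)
Your sufficiency direction (spanning tree $\Rightarrow$ controlling) is correct and is exactly the paper's argument: an edge of $S$ pins down the color difference across it, so propagating from the pre-colored root along the tree determines every color.

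The necessity direction, however, has a genuine flaw. Your second coloring $\col'$ recolors only one parity class of the component $K$ (white squares $1\mapsto 2$, black squares of $K$ left at $0$). But $K$ is a connected component of $(V(G),S_G)$: if $|K|\ge 2$ it is full of edges of $S_G$, and every such edge joins a black square of $K$ to a white square of $K$. Along each of these edges the color difference changes from $1-0=1$ to $2-0=2 \pmod 3$, i.e.\ the MV value of the corresponding crease flips. So $\col'$ does \emph{not} agree with $\col$ on $S$ -- the edges you forgot are not the ones crossing the cut (there are none) but the ones lying entirely inside $K$. Your claim that "$\col'$ agrees with $\col$ on every edge of $S_G$" only addresses the cut edges. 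The repair is what the paper does: you must shift \emph{all} of $K$ by a uniform increment mod $3$, which preserves every internal difference; and to keep the shifted coloring proper across the ($S$-free) cut you cannot start from a globally uniform two-coloring. The paper colors one parity class $1$ everywhere, colors the other parity class $2$ inside the chosen component and $0$ outside it, and then adds $+1$ to the whole component ($1\text{-}2 \mapsto 2\text{-}0$); one checks that $2$ faces $0$ and $0$ faces $1$ across the cut, so validity survives. (With your uniform $0/1$ base, any uniform shift of $K$ creates a monochromatic cut edge, which is presumably why you retreated to shifting one class only.) A further small point either way: the chosen component must avoid the pre-colored vertex, which is possible since $(V(G),S_G)$ has at least two components.
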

\begin{proof}
In one direction, if $S$ contains the creases corresponding to the edges of a spanning tree in the dual grid graph, it is easy to show that $S$ is forcing by propagating colors along the tree from the fixed color in the top left corner.

In the other direction, if the grid graph edges corresponding to $S$ do not span the entire grid graph, then we show that there exist two valid checkerboard colorings that cannot be distinguished by $S$ and therefore $S$ is not a forcing set.
\figref{checkerboard} shows an example where $S$ induces two or more connected components in the grid graph, one of which consists of the vertices in the shaded L-shaped region.     
For the first checkerboard pattern, we assign every other vertex of the grid graph the color 1. 
Then we pick an arbitrary component and assign its remaining vertices 
the color 2, while everywhere else we assign the remaining vertices the color 0, as illustrated in \figref{checkerboard}a.
For the second pattern, we change the 1-2 checkerboard pattern of the selected component to 2-0 (see \figref{checkerboard}b). This preserves the validity of the coloring and cannot be distinguished by a supposed forcing set that includes no edges between the component's vertices
and the rest of the graph.    
\end{proof}

\section{Conclusions}
Inspired by the need to reduce the cost of pre-programmed self-folding mechanisms, we study forcing sets for the classical Miura-ori folding pattern. We present efficient algorithms for determining a minimum forcing set of a Miura-ori map (Sections~\ref{sec:standardmiura} and~\ref{sec:nonstandard}), and for extending a given MV assignment to a forcing set, if possible~(\autoref{sec:forcingsetcharacterization}). We also  introduce the notion of controlling forcing sets, and give tight bounds on the size of controlling sets for the Miura-ori crease pattern (\autoref{sec:tightcontrol}).

Our work opens several new questions, chief among which stands the question of global flat-foldability.  It would be interesting to determine if the MV-assignments induced by some of the algorithms in this paper are globally flat-foldable or not. Extending the results beyond the confines of the Miura-ori crease pattern remains open.

\bibliographystyle{abuser}
\bibliography{forcebib}

\end{document}